\definecolor{myblue}{HTML}{0072BD}
\definecolor{myred}{HTML}{D95319}
\definecolor{myyellow}{HTML}{EDB120}
\definecolor{mypurple}{HTML}{7E2F8E}
\definecolor{mygreen}{HTML}{77AC30}
\title{\LARGE \bf
Bridging Continuous-time LQR and Reinforcement Learning \\ via Gradient Flow of the Bellman Error
}
\author{Armin Gießler, Albertus Johannes Malan and Sören Hohmann
\thanks{A. Gießler, A. J. Malan, S. Hohmann are with the Institute of Control Systems, Karlsruhe Institute of Technology (KIT), 76131, Karlsruhe, Germany. Corresponding author is Armin Gießler, 
{\tt \footnotesize armin.giessler@kit.edu}.}
}
\newtheorem{theorem}{Theorem}
\newtheorem{proposition}{Proposition}
\newtheorem{definition}{Definition}
\newtheorem{remark}{Remark}
\newtheorem{assumption}{Assumption}
\newtheorem{lemma}{Lemma}
\newtheorem{algorithm}{Algorithm}
\begin{document}

\maketitle

\thispagestyle{firstpage}

\begin{abstract}
   In this paper, we present a novel method for computing the optimal feedback gain of the infinite-horizon Linear Quadratic Regulator (LQR) problem via an ordinary differential equation. 
   We introduce a novel continuous-time Bellman error, derived from the Hamilton-Jacobi-Bellman (HJB) equation, which quantifies the suboptimality of stabilizing policies and is parametrized in terms of the feedback gain. We analyze its properties, including its effective domain, smoothness, and coerciveness, and show the existence of a unique stationary point within the stability region. 
   Furthermore, we derive a closed-form gradient expression of the Bellman error that induces a gradient flow. This converges to the optimal feedback and generates a unique trajectory that exclusively comprises stabilizing feedback policies. 
   Additionally, this work advances interesting connections between LQR theory and Reinforcement Learning (RL) by redefining suboptimality of the Algebraic Riccati Equation (ARE) as a Bellman error, adapting a state-independent formulation, and leveraging Lyapunov equations to overcome the infinite-horizon challenge.
   We validate our method in a simulation and compare it to the state of the art.
\end{abstract}

\section{Introduction}
The Linear Quadratic Regulator (LQR) stands as a cornerstone in the realm of control theory. At its core, LQR addresses the fundamental challenge of designing a controller that minimizes a cost function while ensuring system stability. 
By applying quadratic cost functions to linear systems, LQR is known for its elegant formulation, where the optimal solution can be explicitly expressed as a state-feedback control law \cite{kalman1960}.
Over the last six decades, numerous methods have been developed to solve the LQR problem, the majority of which rely 
on obtaining the maximum solution\footnote{Maximality is considered in terms of the partial order induced by positive semidefinite matrices.} of the Algebraic Riccati Equation (ARE). 
These methods include iterative algorithms for discrete-time \cite{hewer1971iterative} and continuous-time systems \cite{kleinman1968iterative}, as well as linear-algebra-based approaches that perform eigenvalue decomposition on the Hamiltonian matrix \cite{lancaster1995algebraic}.
Additionally, the LQR problem can be transformed into a semidefinite program (SDP) \cite{balakrishnan2003semidefinite,yao1999lq}, or equivalently expressed as an $H_2$ norm minimization problem, which can be cast into a convex optimization problem \cite{chen2012optimal,feron1992numerical}.

More recently, Reinforcement Learning (RL) has offered a new perspective on learning optimal controllers for systems with nonlinear, stochastic, or uncertain dynamics \cite{sutton2018reinforcement}.
While RL has advanced rapidly with powerful algorithms to search for optimal policies\footnote{Policy is a synonym for control law.}, ensuring the stability of these learned policies remains challenging. This has led to a growing intersection between RL and control theory \cite{kiumarsi2017optimal,recht2019tour,vrabie2012optimal}. 


\subsubsection*{Literature review}
In contrast to methods relying on the ARE, an alternative approach to solving the LQR problem involves directly minimizing the infinite-horizon LQR cost that is parametrized in the feedback matrix. 
By using Lyapunov equations to handle the infinite horizon, the LQR problem can be minimized using first-order methods with a closed-form gradient.
The gradient expression was initially derived for continuous-time systems in \cite{Levine1970,GEROMEL1982545} and later adopted for discrete-time systems in \cite{maartensson2012gradient}. Rigorous analysis of this approach, including properties such as smoothness, coerciveness, and quadratic growth, is presented in \cite{fazel2018global,bu2019dlqr} for discrete-time systems and \cite{bu2020clqr,Bu2020ACC} for continuous-time systems. 

In the context of RL, policy iteration (PI) iteratively determines the optimal policy for continuous-time systems \cite{vrabie2012optimal}. This method consists of a policy evaluation step, where the continuous-time Bellman equation is used to compute the value function for a stabilizing policy, followed by a policy improvement step to update the control law. PI solves the non-linear Hamilton-Jacobi-Bellman (HJB) equation, which is a sufficient optimality condition
\footnote{For input-affine systems and stage costs, which quadratically penalize the input and contain positive definite functions in the states, the HJB equation provides both a necessary and sufficient optimality condition.}, by iterations on equations linear in the gradient of the value function. In the LQR case, the HJB equation reduces to the continuous-time ARE (CARE), and PI is exactly Kleinman's Algorithm \cite{kleinman1968iterative}. Integral RL (IRL) replaces the continuous-time Bellman equation\footnote{In contrast to the discrete-time Bellman equation, the continuous-time Bellman equation depends on the full system dynamics \cite[Sec. 2.6]{vrabie2012optimal}.} with an integral variant, enabling learning without knowledge of the drift dynamics\footnote{The drift dynamics correspond to the system matrix in the linear case.} 
or state derivatives \cite{VRABIE2009lin,VRABIE2009nonlin,vrabie2012optimal}. In IRL, the Bellman error, defined on this IRL Bellman equation and parametrized in the value matrix, is minimized during the policy evaluation step. 

All the mentioned RL methods learn the optimal value function and derive the policy from it. By directly parametrizing the learning objective in the policy, fewer parameters are generally required, potentially leading to faster convergence and reduced computational complexity in the optimization process. To the best of the authors' knowledge, no existing work parametrizes the HJB equation directly in terms of the feedback gain and formulates a continuous-time Bellman error based on this parametrization.

\subsubsection*{Contributions} 
The main contributions of the paper at hand are:
\begin{enumerate}
   \item Based on the HJB equation, we formulate a novel Bellman error for a continuous-time LQR, designed to quantify and address the suboptimality of stabilizing policies. The resulting matrix function parametrizes the error directly in terms of the feedback gain matrix.
   \item We rigorously analyze the  properties of the Bellman error by addressing its effective domain and proving smoothness, coerciveness, and the existence of a unique stationary point.
   \item We derive a closed-form expression for the gradient of the Bellman error and establish its well-definedness within the region of stabilizing policies. 
    Furthermore, we show that the corresponding gradient flow generates a unique trajectory of stabilizing feedbacks and converges to the solution of the LQR problem.
\end{enumerate}

\subsubsection*{Paper Organization}
This section ends with some notation and preliminaries.
Section \ref{sec:2} introduces the novel Bellman error formulation and examines its  properties. In Section \ref{sec:grad_flow}, the gradient expression and the corresponding gradient flow are presented.
Section \ref{sec:3} provides simulation results and compares our method to a state-of-the-art approach.
Finally, the paper ends with a conclusion in Section \ref{sec:conc}.

\subsubsection*{Notation}
\label{subsubsec:notation}
We denote the set of (nonnegative) real numbers as $\mathbb{R}~ (\mathbb{R}_{\geq 0})$. The transpose of a matrix $A$ (a vector $B$) is denoted by $A^\top$ $(B^\top)$. The identity matrix of dimensions $n\times n$ is given by $I_n$. A symmetric positive definite (semidefinite) matrix $A$ is denoted by $A\succ 0 $ $(A \succeq 0)$. The trace of a square matrix $A$ is denoted by $\operatorname{tr }(A )$.
The spectral (Frobenius) norm of a matrix $A$ is denoted by $\Vert A \Vert_2$ $(\Vert A \Vert_F)$.
The spectrum of a square matrix $A$, containing the eigenvalues $\lambda_i(A)$, is denoted by $\sigma (A)$. The minimum eigenvalue of $A$ is represented by $\lambda_{\min}(A)$. The vectorization and rank of a matrix $A$ are denoted by $\operatorname{vec}(A)$ and $\operatorname{rank}(A)$, respectively. The gradient of a scalar function $f$ with respect to a vector $x$ (matrix $A$), denoted by $\nabla_x f$ $(\nabla_A f)$, is a column vector (matrix) of partial derivatives. 
A square matrix $A$ is Hurwitz stable if its eigenvalues have strictly negative real parts. 

\subsection{Preliminaries}
\subsubsection{Mathematical Preliminaries}
The continuous-time Lyapunov equation is given by 
\begin{align}
   \label{math:Lyap0}
   A P + P A^\top + L = 0, 
\end{align}
where $A,P,L \in\mathbb{R}^{n \times n}$ and $P,L$ are symmetric. 
If $A$ is Hurwitz stable, the solution of \eqref{math:Lyap0} can be written as $   P = \int_0^\infty e^{A t }L e^{A^\top t }\mathrm{d}t < \infty$.

The vectorization of the matrix product $AB$, where $A\in\mathbb{R}^{k\times l},B\in\mathbb{R}^{l\times m}$, is given by
\begin{align*}
   \operatorname{vec}(AB)= \left(I_m \otimes A  \right)\operatorname{vec}(B  ) = \left(B^\top \otimes I_k  \right)\operatorname{vec}(A).
\end{align*}
The inverse of the vectorization operator $\operatorname{vec}^{-1}_{k\times l}: \mathbb{R}^{k l}\to \mathbb{R}^{k\times l}$ can be defined as
\begin{align}
   \label{math:inv}
   A\mapsto  \operatorname{vec}^{-1}_{k\times l}(A) = \left( \operatorname{vec}^\top(I_l ) \otimes I_k\right)(I_l \otimes A).
\end{align}
The Lyapunov equation \eqref{math:Lyap0} can be solved using vectorization\footnote{Rewriting the equation in this form can be computationally expensive and ill-conditioned. A standard approach to solve the Lyapunov equation is the Bartels-Stewart algorithm \cite{bartels1972solution}.}
\begin{align}
   \label{math:vec0}
   \operatorname{vec}(P) & = - \left( I_n \otimes A + A^\top \otimes I_n  \right)^{-1}\operatorname{vec}(L).
\end{align}

Since the trace operator is a linear operator, $ \operatorname{tr }(A+B) = \operatorname{tr }(A ) + \operatorname{tr }(B)$ and $\operatorname{tr }(c A )= c \operatorname{tr }( A )$ hold. Additionally, the trace is invariant under circular shifts and transposition, i.e., $\operatorname{tr }(ABC ) = \operatorname{tr }(BCA ) = \operatorname{tr }(CAB )$ and $ \operatorname{tr }(A )  = \operatorname{ tr }(A^\top)$. The trace of a square matrix $A$ can be calculated as $\operatorname{tr }(A )= \sum_{i=1 }^{n }\lambda_i(A)$.
\subsubsection{Linear Time-Invariant Systems}
We consider a continuous-time linear time-invariant (LTI) system
\begin{align}
   \dot x(t) &= A x(t) + B u(t), \\
   y(t) &= C x(t),
\end{align}
where $A\in\mathbb{R}^{n\times n},B\in\mathbb{R}^{n\times m}$, and $C\in\mathbb{R}^{q\times n}$. 
The pair $(A,B)$ is stabilizable if and only if, for every $\lambda \in \sigma(A)$ with $\operatorname{Re}(\lambda) \geq 0$, the matrix $\begin{bmatrix} \lambda I_n - A & B \end{bmatrix}$ has full rank. Similarly, the pair $(A,C)$ is detectable if and only if, for every $\lambda \in \sigma(A)$ with $\operatorname{Re}(\lambda) \geq 0$, the matrix $\begin{bmatrix} \lambda I_n - A & C \end{bmatrix}$ has full rank.
There exists a time-invariant state feedback controller such that $A-BK$ is Hurwitz stable if and only if the system $(A,B)$ is stabilizable \cite[Theorem 14.5]{hespanha2018}. The set of Hurwitz stabilizing feedback gains is denoted by 
\begin{align}
   \mathcal{K} = \{  K \in\mathbb{R}^{m\times n }: A-BK \text{ is Hurwitz stable}\}.
\end{align}
The set $\mathcal{K}$ is in general open, unbounded, and path-connected \cite[Section 3]{bu2019}.
\subsubsection{Linear Quadratic Regulator}
We briefly recall the standard continuous-time LQR problem which is solved in the remainder of this work \cite{anderson2007}.  
\begin{definition}[LQR Problem]
The LQR problem is defined as
   \label{def:LQR}
      \begin{align}
         \label{math:LQR}
         \begin{split}
         \min_{x,u}~ & \int_{0}^{\infty} u^\top(t) R u(t) + x^\top(t) Q x(t) \mathrm{d}t, \\
         \text{s.t.}~ & \dot x(t) = A x(t)+B u(t), \\
         & x(0) = x_0,
         \end{split}
      \end{align}
   where $Q\succeq 0, R\succ 0$ are weighting matrices of appropriate dimensions.
\end{definition}

If the system $(A,B)$ is stabilizable and the pair $(A,\sqrt{Q})$
is detectable, the optimal value $V^*(x_0)$ of the optimization problem is finite and the optimal solution\footnote{In the following, we omit the explicit time dependency $t$ of variables for simplicity, except where the dependency on $t$ is relevant.} $u^*=-K^* x$ stabilizes the system $(A,B)$. 
The time-invariant optimal feedback, often referred to as the Kalman gain, is given by $K^*=R^{-1}B^\top P^*$, where the Riccati matrix $P^*$ is the positive definite solution of the CARE
\begin{align}
   \label{math:CARE}
   A^\top P + PA - PB R^{-1} B^\top P +Q = 0. 
\end{align}
The optimal value function of the LQR Problem is given by $V^*(x_0)=x_0^\top P^* x_0$. We note that the CARE generally has multiple solutions. Under the stabilizability and detectability conditions, there is 
only one unique positive definite stabilizing solution such that the matrix $A-BR^{-1}B^\top P $ is Hurwitz stable. 
By substituting $K=R^{-1}B^\top P$, \eqref{math:CARE} can be rewritten as
\begin{align}
   A_K^\top P + P A_K + Q + K^\top R K  &=0,  \label{math:Ly}
\end{align}
where $A_K := A-BK$. Note that \eqref{math:Ly} has the form of a Lyapunov equation if $K$ is considered independent of $P$. 
We highlight that the optimal feedback cannot be obtained from \eqref{math:Ly} since there are two unknown matrices $P$ and $K$. For a stabilizing feedback $K$, the corresponding value matrix $P$ can be calculated by solving the Lyapunov equation \eqref{math:Ly}.
\subsubsection{Policy Gradient Flow of the LQR}
The cost of the LQR problem \eqref{math:LQR} can be parametrized with respect to $K$ as
\begin{align}
   \label{math:LQR_cost} f_K:\mathbb{R}^{m\times n}\to\mathbb{R}, \quad
   f_K = x_0^\top P_K x_0 = \operatorname{tr}(P_K x_0 x_0^\top),
\end{align}
where $P_K$\footnote{The subscript $K$ of matrices emphasize that the matrix depends on feedback gain $K$. An exception is the definition $A_K = A-BK$.} is the solution of \eqref{math:Ly} for a given (non-optimal) feedback $K$ \cite[Section 3.2]{bu2020clqr}. The cost $f_K$ is minimized by the gradient flow \cite[Lemma 4.1]{bu2020clqr}
\begin{align}
   \dot K(t) &= - \nabla_K f_K(t), \quad K(0) = K_0 \in\mathcal{K}   \label{math:grad}   
\end{align}
where 
\begin{align}
   \nabla_K f_K & = 2 \left( R K - B^\top P_K \right) Y_K, \\
   Y_K & = \int_0^\infty e^{A_K t} x_0 x_0^\top e^{A_K^\top t}\mathrm{d}t.
\end{align}
The matrix $Y_K$ can be obtained by solving the matrix Lyapunov equation $A_K Y_K + Y_K A_K^\top + x_0 x_0^\top = 0$. The gradient flow \eqref{math:grad} converges to $K^*$ if and only if the initial feedback $K_0$ stabilizes the system $(A,B)$, i.e., $K_0\in\mathcal{K}$. 
The trajectory $K(t)$ generated by \eqref{math:grad} always remains within the set of stabilizing feedbacks $\mathcal{K}$. 

The set $\mathcal{K}$ forms a Riemmanian manifold endowed with the inner product $\langle \cdot , \cdot \rangle_{Y_K} \!= \!\Vert  \cdot  \Vert_{Y_K}$.
The natural gradient flow
\begin{align}
 \dot{K}(t) = - \bigl( \nabla_K f_K(t) \bigr) Y_K^{-\gamma}, \label{math:natural} 
\end{align}  
where $\gamma>0$, follows the direction of the steepest descent with respect to the Riemmanian geomerty \cite[Section 5]{Bu2020ACC}.

\subsubsection{Reinforcement Learning}
The LQR problem in Definition \ref{def:LQR} is a special case of continuous-time Reinforcement Learning for input-affine systems \cite[Chapter 3]{vrabie2012optimal}. For this work, we focus exclusively on the LQR case and present the relevant formulas.
The value function of a stabilizing policy $u=\mu(x)$ is defined as
\begin{align}
   V^\mu (x_0)=\int_0^\infty x^\top  Q x + \mu^\top(x) R \mu(x)   \mathrm{d}t  <\infty, \label{math:value}
\end{align}
and can be seen as a cost-to-go from state $x_0$ at time instant $t$ under the policy $\mu(x)$. By using Leibniz' formula, the infinitesimal version of \eqref{math:value} is 
\begin{align}
   0 = \underbrace{x^\top  Q x + \mu^\top(x) R \mu(x) + \left(\nabla_x V^\mu(x)\right)^\top\dot x}_{=:H(x,\mu(x),\nabla_x V^\mu)}, \label{math:BE}
\end{align}
where $V^\mu(0)=0$. Equation \eqref{math:BE} can be considered as a Bellman equation for continuous-time systems\footnote{
In contrast to the discrete-time Bellman equation, $V(x_k)=x_k^\top Q x_k + u_k^\top R u_k + V(x_{k+1})$, \eqref{math:BE} depends on the system dynamics $\dot x$.}. The optimal value function $V^*(x)$ satisfies the Hamilton-Jacobi-Bellman (HJB) equation 
\begin{align}
   0 = \inf_{\mu} H(x,\mu(x),\nabla_x V^*), \label{math:HJB}
\end{align}
which reduces to the CARE \eqref{math:CARE} in the LQR case.
The following policy iteration algorithm solves the nonlinear HJB equation \eqref{math:HJB} by iterating over equations which are linear in the gradient of the value function.
\begin{algorithm}[Policy iteration \texorpdfstring{\cite[Algorithm 4.1]{vrabie2012optimal}}{}]
\label{alg:PI}
Select a stabilizing policy $\mu_0$. 
\begin{enumerate}
   \item Policy evaluation step: Solve Bellman equation for $V^{\mu_i}$
   \begin{align}
      0 &= x^\top Q x + \mu_i^\top R \mu_i + \left(\nabla_x V^{\mu_i}(x)\right)^\top (Ax + B \mu_i).
   \end{align}
   \item Policy improvement step: Update the policy using 
   \begin{align}
      \mu_{i+1} &= \arg \min_{\mu} H(x,\mu(x),\nabla_x V^{\mu_i}), \\
      & = - \tfrac{1}{2} R^{-1} B^\top \nabla_x V^{\mu_i}. \label{math:control}
   \end{align}
\end{enumerate}
\end{algorithm}

Algorithm \ref{alg:PI} is guaranteed to converge to the optimal policy, but has the drawback that full knowledge of the system dynamics, i.e. $(A,B)$, is needed \cite{leake1967}. For the LQR, Algorithm \ref{alg:PI} is equivalent to Kleinman's Algorithm \cite{kleinman1968iterative}, where the two steps are
\begin{subequations} 
   \label{math:Kleinman}
\begin{align}
   A_{K_i}^\top P_i + P_i  A_{K_i} &= - (K_i^\top R K_i + Q ), \\
   K_{i+1} &= R^{-1} B^\top P_i.  
\end{align}
\end{subequations}

\section{Problem Formulation and analytic properties}
\label{sec:2}
In this section, we define our Bellman error and study its analytic properties. To ensure that the LQR optimization problem is well-posed and that the optimal solution yields a stabilizing controller, we impose the following assumption.
\begin{assumption}
   \label{ass:1}
The system $(A,B)$ is stabilizable
 and the pair $(A,\sqrt{Q})$ is detectable. Additionally, $B \neq 0$ and $Q\neq 0$.
\end{assumption}

Throughout the remainder of this paper, we assume that Assumption \ref{ass:1} holds.
An improved policy for a given value function $V^\mu(x) = x^\top P x, P\succ 0 $ can be calculated by \eqref{math:control}\footnote{In general, there exist infinitely many policies for a value matrix $P$ if $P\neq P^*$. }
\begin{align}
   \mu &= - R^{-1}B^\top P x. \label{math:control1}
\end{align}
By inserting \eqref{math:control1} into the HJB equation \eqref{math:HJB}, we obtain
\begin{align}
   0 & = x^\top (\underbrace{ 2 P A - P B R^{-1} B^\top P + Q}_{=: \tilde M}   )x = \operatorname{tr }(\tilde M x x^\top ) \label{math:CARE2}, \\
   & = x^\top \! (\underbrace{A^\top P + P A - \! P B R^{-1} B^\top P + Q}_{=:  M}) x = \operatorname{tr }(M x x^\top \!), \label{math:CARE3} 
\end{align}
where $M=0$ corresponds to the CARE.

\begin{remark}
Only the symmetric part $\frac{1}{2}( \tilde M^\top + \tilde M )=M$ contributes to the quadratic form and the trace in \eqref{math:CARE2}. Therefore, \eqref{math:CARE2} and \eqref{math:CARE3} are equivalent.
\end{remark}

It is well known that the optimal feedback gain $K$ of the LQR problem \eqref{math:LQR} is state-independent. Inspired by the initial state-independent formulation of \cite{bu2019dlqr,bu2020clqr}, we reformulate \eqref{math:CARE3} as
\vspace{-0.3cm}
\begin{align}
   0 = \sum_{j=1}^n \operatorname{tr }(M x_j x_j^\top), \label{math:ind}
\end{align}
where the states $x_1,\dots,x_n$ are linearly independent.
Without loss of generality, we may choose $x_1,\dots,x_n$ as the standard basis vectors, in which case $\sum_{j=1}^n x_j x_j^\top = I_n$, and \eqref{math:ind} simplifies to $0=\operatorname{tr }(M)$ by linearity of the trace operator.
The equation \eqref{math:ind} holds if $P=P^*$, since $M=0$.\footnote{Note that, more generally, any matrix $M$ for which the sum of its eigenvalues is zero satisfies \eqref{math:ind}, not only $M=0$.} In order to quantify the suboptimality of a given stabilizing feedback, we make the following definition.

\begin{definition}[Bellman error]
   \label{def:sbe}
   The Bellman error param-etrized in the feedback $K$ is defined as the matrix function $e_K:\mathbb{R}^{m\times n}\to \mathbb{R}$
   \begin{align}
      \label{math:sbe}
       K \mapsto e_K = -\operatorname{tr }(\underbrace{A^\top P_K + P_K A - P_K B R^{-1} B^\top P_K + Q}_{=:M_K}),
   \end{align}
   where the matrix $P_K$ satisfies the Lyapunov equation
   \begin{align}
      0 & = A_K^\top P_K + P_K A_K + Q + K^\top  R K. \label{math:Ly2}
   \end{align}
\end{definition}

\subsection{Analytic Properties of the Bellman Error}
We now investigate for which $K\in\mathbb{R}^{m\times n}$ the Lyapunov equation \eqref{math:Ly2} has a unique solution.
\begin{lemma}
   \label{lemma:1}
   The Lyapunov equation \eqref{math:Ly2} has a unique solution $P_K$ if and only if $A_K$ and $-A_K$ have no eigenvalue in common, i.e., $\sigma(A_K)\cap  \sigma(-A_K)=\emptyset$.
\end{lemma}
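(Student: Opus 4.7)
The plan is to recast the Lyapunov equation \eqref{math:Ly2} as a standard linear system via vectorization and then characterize invertibility of the corresponding coefficient matrix through its spectrum.

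First, I would apply the $\operatorname{vec}$ operator to both sides of \eqref{math:Ly2}. Using the identity $\operatorname{vec}(AXB) = (B^\top \otimes A)\operatorname{vec}(X)$ stated in the preliminaries, the equation becomes
\begin{align*}
   \underbrace{\bigl( I_n \otimes A_K^\top + A_K^\top \otimes I_n \bigr)}_{=:\mathcal{L}_K} \operatorname{vec}(P_K) = -\operatorname{vec}(Q + K^\top R K),
\end{align*}
in analogy with \eqref{math:vec0}. Hence the Lyapunov equation has a unique solution $P_K$ if and only if the Kronecker-sum matrix $\mathcal{L}_K$ is invertible, equivalently, $0 \notin \sigma(\mathcal{L}_K)$.

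The main step is to identify the spectrum of $\mathcal{L}_K$. I would invoke the standard fact that for any square matrices $X,Y$, the Kronecker sum $I \otimes X + Y \otimes I$ has eigenvalues of the form $\lambda_i(X) + \lambda_j(Y)$, as $i,j$ range over the spectra of $X$ and $Y$. Applying this with $X = Y = A_K^\top$, and using that $\sigma(A_K^\top) = \sigma(A_K)$, we obtain
\begin{align*}
   \sigma(\mathcal{L}_K) = \bigl\{ \lambda_i(A_K) + \lambda_j(A_K) : \lambda_i, \lambda_j \in \sigma(A_K) \bigr\}.
\end{align*}
Therefore $0 \in \sigma(\mathcal{L}_K)$ if and only if there exist eigenvalues $\lambda_i, \lambda_j$ of $A_K$ with $\lambda_i = -\lambda_j$, i.e., if and only if $A_K$ and $-A_K$ share a common eigenvalue.

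Putting these two equivalences together yields the claim: \eqref{math:Ly2} admits a unique solution $P_K$ precisely when $\sigma(A_K) \cap \sigma(-A_K) = \emptyset$. The only nontrivial piece is the eigenvalue formula for the Kronecker sum, which is a classical result (e.g., via simultaneous triangularization) that I would cite rather than rederive. No obstacle is foreseen; the argument is essentially linear algebraic bookkeeping, and the symmetry of the right-hand side $Q + K^\top R K$ is not needed for existence and uniqueness, only for ensuring the solution $P_K$ itself is symmetric, which follows because transposing \eqref{math:Ly2} shows $P_K^\top$ also solves the equation and uniqueness forces $P_K = P_K^\top$.
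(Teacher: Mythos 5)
Your proof is correct, but it takes a different route from the paper. The paper's proof is a one-liner: it observes that \eqref{math:Ly2} is a special case of the Sylvester equation $AX - XB = C$ and directly cites the Sylvester uniqueness theorem (unique solution iff $\sigma(A)\cap\sigma(B)=\emptyset$, here with $A = A_K^\top$ and $B = -A_K$). You instead vectorize the equation and characterize invertibility of the Kronecker-sum matrix $I_n \otimes A_K^\top + A_K^\top \otimes I_n$ through its spectrum $\{\lambda_i(A_K)+\lambda_j(A_K)\}$. Both arguments are sound; yours is essentially the standard proof of the Sylvester theorem specialized to the Lyapunov case, so it is self-contained where the paper's is citation-based. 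Your route has the side benefit of directly producing the closed-form expression \eqref{math:vec} that the paper only introduces afterwards in Remark \ref{remark:inv_vec}, and the same Kronecker-sum eigenvalue fact you invoke is exactly what the paper later uses in the proof of Lemma \ref{lemma:3} to establish real analyticity. Your closing observation that symmetry of $P_K$ follows from uniqueness plus transposing the equation is correct and is a detail the paper does not spell out.
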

\begin{proof}
   The Lyapunov equation \eqref{math:Ly2} is a special case of the Sylvester equation $A X -XB = C$. By Sylvester equation Theorem, \eqref{math:Ly2} has a unique solution if and only if  $\sigma(A_K)\cap  \sigma(-A_K)=\emptyset$ \cite[Theorem 2.4.4.1]{horn2012matrix}.
\end{proof}

The set of feedback matrices that satisfy the condition of Lemma \ref{lemma:1} is denoted by 
\begin{align}
   \mathcal{K}_\sigma := \{ K\in\mathbb{R}^{n\times m}: \sigma(A_K)\cap  \sigma(-A_K)=\emptyset\}. \label{math:K_sigma}
\end{align}

\begin{remark}
   \label{remark:inv_vec}
   The Lyapunov equation \eqref{math:Ly2} can be solved using vectorization for $K\in\mathcal{K}_\sigma$
   \begin{align}
      \label{math:vec}
      \operatorname{vec}(P_K  )&= - \left(I_n \otimes A_K^\top + A_K^\top \otimes I_n \right)^{-1} \operatorname{vec}(Q + K^\top R K   ). 
   \end{align}
   By applying the inverse vectorization operator \eqref{math:inv}, the Bellman error \eqref{math:sbe} can be expressed in closed form for $K\in\mathcal{K}_\sigma$. 
\end{remark}

\begin{lemma}
 If $A_K$ is Hurwitz stable, then the solution $P_K$ of the Lyapunov equation \eqref{math:Ly2} is unique and positive semidefinite.
\end{lemma}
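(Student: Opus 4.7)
The plan is to split the proof into two parts: uniqueness and positive semidefiniteness.

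For uniqueness, I would first observe that Hurwitz stability of $A_K$ immediately implies $\sigma(A_K)\cap\sigma(-A_K)=\emptyset$. Indeed, every $\lambda\in\sigma(A_K)$ satisfies $\operatorname{Re}(\lambda)<0$, so $-\lambda$ has strictly positive real part and cannot lie in $\sigma(A_K)$. Thus $K$ lies in the set $\mathcal{K}_\sigma$ defined in \eqref{math:K_sigma}, and applying Lemma \ref{lemma:1} directly yields the existence of a unique solution $P_K$ of \eqref{math:Ly2}.

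For positive semidefiniteness, I would invoke the integral representation stated in the mathematical preliminaries. Matching the Lyapunov equation \eqref{math:Ly2} with the prototype $AP+PA^\top + L = 0$ (taking the transposed form $A=A_K^\top$ and $L = Q+K^\top R K$), Hurwitz stability of $A_K$ guarantees
\begin{align*}
   P_K = \int_0^\infty e^{A_K^\top t}\bigl(Q + K^\top R K\bigr) e^{A_K t}\,\mathrm{d}t,
\end{align*}
with the integral converging. By uniqueness this is the unique solution. Since $Q \succeq 0$ by Assumption \ref{ass:1} and $K^\top R K \succeq 0$ because $R\succ 0$, the integrand is positive semidefinite for every $t\geq 0$. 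Therefore, for any $v\in\mathbb{R}^n$,
\begin{align*}
   v^\top P_K v = \int_0^\infty \bigl(e^{A_K t} v\bigr)^\top \bigl(Q+K^\top R K\bigr)\bigl(e^{A_K t} v\bigr)\,\mathrm{d}t \geq 0,
\end{align*}
which establishes $P_K \succeq 0$. Symmetry of $P_K$ follows from the symmetry of the integrand.

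There is no real obstacle here; the statement is a standard consequence of the integral formula, and the only subtlety is matching the orientation of the Lyapunov equation \eqref{math:Ly2} (which has $A_K^\top$ on the left) with the integral formula stated in the preliminaries (which has $A$ on the left). This is handled by transposing appropriately, as done above.
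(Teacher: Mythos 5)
Your proof is correct and follows essentially the same route as the paper: Hurwitz stability gives $\sigma(A_K)\cap\sigma(-A_K)=\emptyset$ so Lemma \ref{lemma:1} yields uniqueness, and the integral representation with the positive semidefinite integrand $Q+K^\top R K$ gives $P_K\succeq 0$. The extra care you take in matching the orientation of \eqref{math:Ly2} to the prototype Lyapunov equation and in spelling out the quadratic form $v^\top P_K v\geq 0$ is a welcome refinement of the paper's argument, not a deviation from it.
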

\begin{proof}
   Since $A_K$ is Hurwitz stable, its eigenvalues have strictly negative real parts, ensuring $\sigma(A_K)\cap  \sigma(-A_K)=\emptyset$. By Lemma \ref{lemma:1}, \eqref{math:Ly2} has a unique solution $P_K$. Moreover, the Hurwitz stability of $A_K$ implies that $P_K$ can equivalently be represented as
   \begin{align}
      P_K = \int_{0}^{\infty} e^{A_K^\top t } \left(Q + K^\top R K \right) e^{A_K t }\mathrm{d}t. 
   \end{align}
   Since $Q\succeq 0$ and $R\succ 0$, $Q+K^\top R K \succeq 0$. Therefore, the integrand is positive semidefinite for all $t\geq 0$, leading to $P_K\succeq 0$.
\end{proof}

\begin{lemma}
   \label{lemma:M_K_def}
   The matrix $M_K$ of Definition \ref{def:sbe} is negative semidefinite for $K\in\mathcal{K}_\sigma$. 
   Furthermore, the image of the matrix function $e_K$ over $\mathcal{K}_\sigma$ satisfies $e_K(\mathcal{K}_\sigma)\subseteq \mathbb{R}_{\geq 0}$.
\end{lemma}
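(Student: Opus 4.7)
The plan is to rewrite $M_K$ as a complete square in $(K - R^{-1}B^\top P_K)$, from which both claims become immediate. The key observation is that the Lyapunov equation \eqref{math:Ly2} (which is available for $K\in\mathcal{K}_\sigma$ by Lemma~\ref{lemma:1}) lets us eliminate the open-loop term $A^\top P_K + P_K A$ appearing in $M_K$ in favor of expressions involving $K$.

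Concretely, I would first note that for $K\in\mathcal{K}_\sigma$ the unique solution $P_K$ of \eqref{math:Ly2} is symmetric: since the right-hand side $-(Q+K^\top R K)$ is symmetric, $P_K^\top$ is also a solution, and uniqueness forces $P_K=P_K^\top$. Expanding $A_K=A-BK$ in \eqref{math:Ly2} and rearranging yields
\begin{equation*}
A^\top P_K + P_K A = -Q - K^\top R K + K^\top B^\top P_K + P_K B K.
\end{equation*}
Substituting this into the definition of $M_K$ in \eqref{math:sbe} gives
\begin{equation*}
M_K = -K^\top R K + K^\top B^\top P_K + P_K B K - P_K B R^{-1} B^\top P_K.
\end{equation*}
A direct expansion of $(K-R^{-1}B^\top P_K)^\top R (K-R^{-1}B^\top P_K)$, using symmetry of $P_K$ and $R$, reproduces exactly the negative of the right-hand side, so that
\begin{equation*}
M_K = -(K - R^{-1}B^\top P_K)^\top R (K - R^{-1}B^\top P_K).
\end{equation*}
Since $R\succ 0$, the matrix $(K - R^{-1}B^\top P_K)^\top R (K - R^{-1}B^\top P_K)$ is positive semidefinite, so $M_K\preceq 0$.

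For the second claim, apply the trace to this identity: $e_K = -\operatorname{tr}(M_K) = \operatorname{tr}\!\bigl((K - R^{-1}B^\top P_K)^\top R (K - R^{-1}B^\top P_K)\bigr) \geq 0$, since the trace of a positive semidefinite matrix is nonnegative. Thus $e_K(\mathcal{K}_\sigma)\subseteq\mathbb{R}_{\geq 0}$. I do not expect any real obstacle here; the only point requiring a little care is ensuring symmetry of $P_K$ (so that $M_K$ itself is symmetric and talking about semidefiniteness is meaningful), which as noted follows from the uniqueness statement in Lemma~\ref{lemma:1} applied to $P_K$ and $P_K^\top$.
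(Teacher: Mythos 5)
Your proposal is correct and follows essentially the same route as the paper: eliminate $A^\top P_K + P_K A$ via the Lyapunov equation \eqref{math:Ly2} and complete the square to obtain $M_K = -(K - R^{-1}B^\top P_K)^\top R (K - R^{-1}B^\top P_K) \preceq 0$, then take the trace. Your explicit remark that $P_K$ is symmetric by uniqueness is a small, welcome addition the paper leaves implicit.
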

\begin{proof}
   In the following, we find an equivalent expression for $M_K$ by using the Lyapunov equation \eqref{math:Ly2}.
   By substituting $A_K=A-BK$ into \eqref{math:Ly2} and solve for $A^\top P_K + P_K A$, we obtain
   {\small
   \begin{align}
      A^\top P_K + P_K A  = - Q - K^\top R K + K^\top B^\top P_K + P_K B K \label{math:MK_temp} 
   \end{align}}Inserting \eqref{math:MK_temp} into $M_K$ yields   {\small
   \begin{align}
      M_K &= -K^\top R K + K^\top B^\top P_K + P_KBK - P_K B R^{-1} B^\top P_K, \notag \\
      & = -\left( K - R^{-1}B^\top P_K \right)^\top R \left(  K - R^{-1}B^\top P_K\right). \label{math:MK_new}
   \end{align}}Since $R\succ 0$ and $P_K$ is well-defined and finite for $K\in\mathcal{K}_\sigma$ (see Lemma \ref{lemma:1}), the matrix $M_K$ is negative semidefinite \cite[Observation 7.1.8]{horn2012matrix}. Moreover, since $M_K$ has only nonpositive eigenvalues, $-\operatorname{tr }(M_K) = -\sum_i \lambda_i(M_K) \geq 0$ and thus $e_K(\mathcal{K}_\sigma)\subseteq \mathbb{R}_{\geq 0}$ holds.
\end{proof}

The following four lemmas are closely related to the results presented in \cite{bu2020clqr,bu2019dlqr}, where similar conclusions were drawn for the continuous-time and discrete-time LQR cost. We adapt these results to the Bellman error \eqref{math:sbe}.
In the next lemma, we study for which feedback $K$ the Bellman error is well-defined and address the effective domain $\operatorname{dom}(e_K)=\{ K\in\mathbb{R}^{m\times n}: e_K < \infty \}$.

\begin{lemma}
   \label{lemma:2}
   The effective domain of the matrix function $e_K$ \eqref{math:sbe} is 
   given by $\operatorname{dom}(e_K) = \mathcal{K}_\sigma$.
\end{lemma}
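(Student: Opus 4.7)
The plan is to prove the set equality $\operatorname{dom}(e_K) = \mathcal{K}_\sigma$ by establishing both inclusions, each of which will follow from Lemma \ref{lemma:1} plus the vectorized solution formula of Remark \ref{remark:inv_vec}. The easy inclusion $\mathcal{K}_\sigma \subseteq \operatorname{dom}(e_K)$ is essentially a bookkeeping step: for any $K \in \mathcal{K}_\sigma$, Lemma \ref{lemma:1} supplies a unique solution $P_K$ of \eqref{math:Ly2}, and the closed-form expression from Remark \ref{remark:inv_vec} shows that every entry of $P_K$ is finite. Hence $M_K$ is a finite matrix, $e_K = -\operatorname{tr}(M_K)$ is a finite real number, and $K$ lies in the effective domain.

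For the reverse inclusion $\operatorname{dom}(e_K) \subseteq \mathcal{K}_\sigma$ I would argue by contrapositive. Suppose $K \notin \mathcal{K}_\sigma$, so that $\sigma(A_K) \cap \sigma(-A_K) \neq \emptyset$. To see concretely that \eqref{math:Ly2} fails to define $P_K$, I would again pass to the vectorized form and note that the Kronecker sum $I_n \otimes A_K^\top + A_K^\top \otimes I_n$ has spectrum $\{\lambda_i(A_K) + \lambda_j(A_K)\}_{i,j=1}^n$; by assumption $0$ lies in this spectrum, so the linear operator is singular. Equivalently, Lemma \ref{lemma:1} (in its contrapositive direction) tells us that \eqref{math:Ly2} does not admit a unique solution: it either has no solution at all, or an entire affine family of solutions parametrized by the kernel of the Kronecker sum. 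In either scenario $P_K$ is not well-defined as a matrix function of $K$, so $e_K$ cannot be assigned a single finite value. Consistent with the standard effective-domain convention, we set $e_K = +\infty$, whence $K \notin \operatorname{dom}(e_K)$.

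The main obstacle is not computational but definitional: one has to be explicit that Definition \ref{def:sbe} implicitly requires $P_K$ to be the \emph{unique} solution of \eqref{math:Ly2}, otherwise $e_K$ would be multi-valued on part of $\mathbb{R}^{m \times n}$ and the notion of effective domain would be ill-posed. Once this convention is pinned down, the result reduces to two symmetric applications of Lemma \ref{lemma:1} together with the finiteness observation from Remark \ref{remark:inv_vec}, and the only mildly delicate point is handling uniformly the two distinct failure modes (no solution versus infinitely many solutions) that can occur when $K \notin \mathcal{K}_\sigma$.
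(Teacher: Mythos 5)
Your proof is correct and follows essentially the same route as the paper: both directions reduce to Lemma \ref{lemma:1} together with the finiteness of the vectorized solution from Remark \ref{remark:inv_vec}. If anything, your treatment of the case $K \notin \mathcal{K}_\sigma$ is more careful than the paper's, which asserts that ``$P_K$ grows unbounded'' — a statement that does not quite make sense for a fixed $K$ — whereas you correctly distinguish the two failure modes (no solution versus an affine family of solutions) and invoke the convention $e_K = +\infty$ when $P_K$ is not uniquely defined.
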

\begin{proof}
   Through Lemma \ref{lemma:1}, the Lyapunov equation \eqref{math:Ly2} has a unique solution $P_K$ if and only if $K\in\mathcal{K}_\sigma$. When $P_K$ exists, $M_K$ is well-defined as $M_K = A^\top P_K + P_K A - P_K B R^{-1} B^\top P_K + Q$, and consequently $e_K = -\operatorname{tr}(M_K)$ is finite. If $K\notin\mathcal{K}_\sigma$, the Lyapunov equation does not admit a solution. Specifically, $P_K$ grows unbounded, leading to $e_K \to \infty$. Therefore, $e_K$ is well-defined and finite if and only if $K\in\mathcal{K}_\sigma$.
\end{proof}

\begin{remark}
   In contrast to \cite{bu2020clqr}, where the set of stabilizing feedback gains $\mathcal{K}$ is the interior of $\operatorname{dom}(f_K)$, the effective domain of $e_K$ contains unstable feedbacks, i.e., $K\in\mathcal{K}_{\sigma}\setminus \mathcal{K}$.
\end{remark}

\begin{lemma}
   \label{lemma:3}
   The Bellman error \eqref{math:sbe} is real analytic over $\mathcal{K}_{\sigma}$, i.e., $e_K\in C^\omega(\mathcal{K}_{\sigma})$.\footnote{The class $C^\omega$ denotes smooth functions that can be locally represented by convergent power series.}
\end{lemma}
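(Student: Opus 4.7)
The plan is to exploit the closed-form vectorised expression for $P_K$ supplied in Remark \ref{remark:inv_vec} and exhibit $e_K$ as a rational function of the entries of $K$ on $\mathcal{K}_\sigma$ whose denominator does not vanish. Since rational functions with non-vanishing denominators are real analytic, and composition and trace preserve real analyticity, the claim will follow.

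First, I would introduce the Kronecker-sum matrix $L(K) := I_n \otimes A_K^\top + A_K^\top \otimes I_n$. Each entry of $L(K)$ is an affine, hence polynomial, function of the entries of $K$, because $A_K = A-BK$ is affine in $K$. By the standard spectral identity for Kronecker sums, the eigenvalues of $L(K)$ are exactly the pairwise sums $\lambda_i(A_K)+\lambda_j(A_K)$ of eigenvalues of $A_K$. Consequently, $L(K)$ is invertible if and only if $\sigma(A_K)\cap\sigma(-A_K)=\emptyset$, i.e., if and only if $K\in\mathcal{K}_\sigma$. As $K\mapsto\det L(K)$ is a polynomial and thus continuous, this shows in passing that $\mathcal{K}_\sigma=\{K:\det L(K)\neq 0\}$ is an open subset of $\mathbb{R}^{m\times n}$, which is the natural setting in which to speak of real analyticity.

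Next, I would invoke Cramer's rule: on the open set of invertible matrices, each entry of the inverse is a polynomial in the entries of the matrix divided by its determinant, and is therefore real analytic. Composing with the polynomial map $K\mapsto L(K)$ yields that $K\mapsto L(K)^{-1}$ is real analytic on $\mathcal{K}_\sigma$. Since $\operatorname{vec}(Q+K^\top R K)$ is polynomial in $K$, Remark \ref{remark:inv_vec} expresses $\operatorname{vec}(P_K)=-L(K)^{-1}\operatorname{vec}(Q+K^\top R K)$ as the product of two real analytic matrix-valued functions, hence $K\mapsto P_K$ is real analytic on $\mathcal{K}_\sigma$. Applying the inverse vectorisation operator \eqref{math:inv}, which is a fixed linear map, preserves analyticity.

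Finally, I would observe that $M_K = A^\top P_K + P_K A - P_K B R^{-1} B^\top P_K + Q$ is a polynomial expression in the entries of $P_K$ and $K$ (with $A$, $B$, $Q$, $R$ constant), so $K\mapsto M_K$ is real analytic on $\mathcal{K}_\sigma$ by composition, and $e_K=-\operatorname{tr}(M_K)$ inherits real analyticity because the trace is a bounded linear functional. I do not anticipate a substantive obstacle; the only point requiring care is the identification of $\mathcal{K}_\sigma$ with the zero-set complement $\{K:\det L(K)\neq 0\}$, which follows at once from the Kronecker-sum spectral formula and is precisely what guarantees that the rational representation of $e_K$ is well defined on all of $\mathcal{K}_\sigma$.
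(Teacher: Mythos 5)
Your proposal is correct and follows essentially the same route as the paper: invertibility of the Kronecker sum $I_n\otimes A_K^\top + A_K^\top\otimes I_n$ on $\mathcal{K}_\sigma$ via its eigenvalues $\lambda_i(A_K)+\lambda_j(A_K)$, Cramer's rule to show $K\mapsto P_K$ is rational with non-vanishing denominator, and composition through $M_K$ and the trace. Your added remark that $\mathcal{K}_\sigma=\{K:\det L(K)\neq 0\}$ is open is a useful explicit detail the paper leaves implicit, but the argument is the same.
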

\begin{proof}
   For every $K\in\mathcal{K}_{\sigma}$, the matrix $I_n \otimes A_K^\top + A_K^\top \otimes I_n $ (see \eqref{math:vec}) is invertible, since the eigenvalues of it are $\{\lambda_i(A_K)+\lambda_j(A_K) \mid 1\leq i,j \leq n \}$ and hence nonzero. By Cramer's rule, the map $K \mapsto P_K$ \eqref{math:vec} is a rational function of polynomials in the entries of $K$ and thus $C^\omega$. The matrix function $e_K$ is a composition of $C^\omega$ maps 
   \begin{align}
      \label{math:comp}
      K \mapsto P_K \mapsto M_K \mapsto -\operatorname{tr}(M_K)
   \end{align}
   and thus real analytic over $\mathcal{K}_{\sigma}$.
\end{proof}

Since the Bellman error $e_K$ is real analytic over $\mathcal{K}_{\sigma}$, it is infinitely differentiable and all derivatives are continuous. 
For the remainder of this section, we analyze $e_K$ over the set of stabilizing feedbacks $\mathcal{K}\subset \mathcal{K}_{\sigma}$.
We observe that $e_K$ diverges to infinity either as $K$ approaches the boundary of $\mathcal{K}$\footnote{For LTI systems with $m=1$, the equality in \eqref{math:boundary} holds. However, for $m>1$, establishing whether equality holds or only a subset relation applies becomes significantly more intricate (see \cite[Prop. 3.5]{bu2019}).}
\begin{align}
   \partial \mathcal{K} \subseteq \{K\in\mathbb{R}^{m\times n}: \max_{\lambda \in \sigma(A_K)} \operatorname{Re}(\lambda) =0 \}, \label{math:boundary}
\end{align}
or when $K\in\mathcal{K}$ becomes unbounded.
\begin{lemma}
   \label{lemma:4}
   The Bellman error $e_K$ is coercive over $\mathcal{K}$ in the sense that 
   \begin{align}
      \lim_{K_i \to K\in\partial \mathcal{K}} e_{K_i} = \infty   ~ \text{ and} ~ \lim_{\Vert K\Vert_2 \to \infty,K\in\mathcal{K}}e_K=\infty.
   \end{align}
\end{lemma}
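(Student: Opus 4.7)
The plan is to leverage the completed-square identity from Lemma \ref{lemma:M_K_def},
\begin{align*}
e_K \;=\; \bigl\|R^{1/2}K - R^{-1/2}B^\top P_K\bigr\|_F^2 \;\geq\; \lambda_{\min}(R)\,\|\epsilon_K\|_F^2,
\end{align*}
where $\epsilon_K := K - R^{-1}B^\top P_K$, combined with the reverse triangle inequality $\|\epsilon_K\|_F \geq \bigl|\|K\|_F - \|R^{-1}B^\top P_K\|_F\bigr|$. The strategy is to use this lower bound to reduce coerciveness of $e_K$ to divergence of $\|B^\top P_K\|_F$ in the two limits.

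For the boundary limit, I would fix $K_i \to K \in \partial\mathcal{K}$ and, in view of \eqref{math:boundary}, let $\lambda_i \to j\omega$ be a least-stable eigenvalue of $A_{K_i}$ with normalized eigenvector $v_i \to v$. Pairing the Lyapunov equation \eqref{math:Ly2} with $v_i$ yields the scalar identity $v_i^* P_{K_i} v_i = -v_i^*(Q + K_i^\top R K_i)v_i / (2\operatorname{Re}\lambda_i)$; the denominator vanishes while the numerator cannot, because $v^*(Q+K^\top RK)v=0$ together with $R\succ 0$ would force $Kv = 0$ and $\sqrt{Q}v = 0$, so that $Av = A_Kv = j\omega v$ with $\sqrt{Q}v = 0$ would contradict detectability of $(A,\sqrt{Q})$ via PBH. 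Hence $\|P_{K_i}v_i\| \to \infty$. Applying the Lyapunov equation to $v_i$ next gives $A_{K_i}^\top P_{K_i}v_i + \lambda_i P_{K_i} v_i = O(1)$, so the unit vectors $p_i := P_{K_i}v_i/\|P_{K_i}v_i\|$ subsequentially converge to some $\hat p$ satisfying $\hat p^* A_K = j\omega \hat p^*$; rewriting as $\hat p^* A = j\omega \hat p^* + \hat p^* BK$ and invoking stabilizability of $(A,B)$ through PBH excludes $\hat p^* B = 0$. Therefore $\|B^\top P_{K_i}\|_F \geq \|B^\top P_{K_i}v_i\| \to \infty$, and boundedness of $\{K_i\}$ in the reverse triangle inequality yields $e_{K_i} \to \infty$.

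For the unbounded limit $\|K\|_2 \to \infty$ in $\mathcal{K}$, I would split on whether $\|B^\top P_K\|_F$ stays bounded. If it does along a subsequence, the reverse triangle inequality gives $\|\epsilon_K\|_F \geq \|K\|_F - \|R^{-1}B^\top P_K\|_F \to \infty$ and $e_K \to \infty$ follows directly. Otherwise both $K$ and $B^\top P_K$ diverge; I would then rewrite the Lyapunov equation as $A^\top P_K + P_K A = K^\top R K - K^\top R \epsilon_K - \epsilon_K^\top R K - Q$ and argue that a bounded $\|\epsilon_K\|_F$ would force $A^\top P_K + P_K A$ to grow quadratically in $\|K\|_F$, which combined with a PBH argument along the unstabilizable directions, analogous to the boundary case, leads to a contradiction. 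This second sub-case is the principal obstacle: both $R^{1/2}K$ and $R^{-1/2}B^\top P_K$ may grow simultaneously, so precluding near-cancellation in their difference requires exploiting the full Lyapunov structure together with stabilizability, rather than the bound of Lemma \ref{lemma:M_K_def} alone.
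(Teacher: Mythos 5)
Your reduction of coercivity to the divergence of $\Vert \epsilon_K\Vert_F$ via $e_K = \Vert R^{1/2}K - R^{-1/2}B^\top P_K\Vert_F^2 \geq \lambda_{\min}(R)\Vert\epsilon_K\Vert_F^2$ is a correct reading of \eqref{math:MK_new}, and your treatment of the boundary limit is sound --- and a genuinely different route from the paper's. The paper argues via the vectorized Lyapunov equation \eqref{math:vec}: as $K_i\to K\in\partial\mathcal{K}$ the operator $I_n\otimes A_{K_i}^\top + A_{K_i}^\top\otimes I_n$ acquires a vanishing eigenvalue, whence $\Vert P_{K_i}\Vert_2\to\infty$. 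Your eigenvector pairing $v_i^* P_{K_i}v_i = -v_i^*(Q+K_i^\top RK_i)v_i/(2\operatorname{Re}\lambda_i)$, together with the PBH tests for detectability (to keep the numerator away from zero) and for stabilizability (to keep $B^\top\hat p\neq 0$), makes explicit exactly where Assumption \ref{ass:1} enters, which the paper's operator-norm argument leaves implicit.

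However, the second limit is not proven. In your sub-case (b) --- $\Vert K\Vert_2\to\infty$ with $\Vert B^\top P_K\Vert_F$ also divergent --- you state the needed conclusion (bounded $\Vert\epsilon_K\Vert_F$ ``leads to a contradiction'') but supply no argument, and you yourself flag this as the principal obstacle. Ruling out near-cancellation between $R^{1/2}K$ and $R^{-1/2}B^\top P_K$ is precisely the content of the lemma in this regime, so the gap is substantive: one would need, e.g., a normalization and compactness argument on $\Pi_K := P_K/\Vert P_K\Vert_2$ in the identity $A^\top P_K + P_KA - P_KBR^{-1}B^\top P_K + Q = -\epsilon_K^\top R\epsilon_K$, yielding in the limit a nonzero $\Pi\succeq 0$ with $B^\top\Pi = 0$ and $A^\top\Pi+\Pi A\succeq 0$, followed by an invariant-subspace/PBH argument to contradict stabilizability --- none of which appears in your sketch. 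The paper avoids this case split entirely: it takes spectral norms of \eqref{math:Ly2} to obtain $\Vert P_K\Vert_2 \geq \bigl(\lambda_{\min}(R)\Vert K\Vert_2^2 - \Vert Q\Vert_2\bigr)/\bigl(2\Vert A\Vert_2 + 2\Vert B\Vert_2\Vert K\Vert_2\bigr)$, so that $\Vert P_K\Vert_2$ grows at least linearly in $\Vert K\Vert_2$, and then lower-bounds $\Vert M_K\Vert_2$ using the \emph{unfactored} Riccati-residual form of $M_K$, namely $\Vert M_K\Vert_2 \geq \Vert P_KBR^{-1}B^\top P_K\Vert_2 - 2\Vert A\Vert_2\Vert P_K\Vert_2 - \Vert Q\Vert_2$, so the term quadratic in $\Vert P_K\Vert_2$ dominates; since $M_K\preceq 0$, $e_K = -\operatorname{tr}(M_K)\geq\Vert M_K\Vert_2\to\infty$. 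Adopting that norm-based bound (or completing the compactness argument above) is what is missing from your proposal.
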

\vspace{0.15cm}
\begin{proof}
   Let $\{K_i\}$ be a sequence in $\mathcal{K}$ converging to a fixed $K\in\partial \mathcal{K}$. The corresponding sequence of matrices $\{P_{K_i}\}$ satisfies \eqref{math:vec}. As $K_i\to K\in\partial\mathcal{K}$, the matrix $I_n\otimes A_{K_i}^\top + A_{K_i}^\top \otimes I_n $ has at least one eigenvalue approaching zero and thus $\Vert \left( I_n\otimes A_{K_i}^\top + A_{K_i}^\top \otimes I_n  \right)^{-1}\Vert_2 \to \infty$. Since $\operatorname{vec}(Q+K_i^\top R K_i )$ remains bounded for all $K_i$, it follows that $\Vert P_{K_i}\Vert_2\to\infty$ as $K_i\to K \in\partial \mathcal{K}$. 
   On the other hand, consider $\Vert K \Vert_2 \to \infty$ with $K\in\mathcal{K}$. 
   Taking the spectral norm on the Lyapunov equation \eqref{math:Ly2} yields
\begin{align*}
   \Vert (A-BK)^\top P_K + P_K ( A-BK) \Vert_2 = \Vert -  Q - K^\top R K \Vert_2.
\end{align*}
The left-hand side can be upper-bounded with the triangle inequality and submultiplicativity as 
\begin{align}
   \Vert (A-BK)^\top P_K + P_K ( A-BK) \Vert_2 \\ \leq 2 \Vert A \Vert_2 \Vert P_K \Vert_2 + 2 \Vert B \Vert_2 \Vert K \Vert_2 \Vert P_K \Vert_2
\end{align}
and a lower bound for the right-hand side can be obtained by using the reverse triangle inequality\footnote{The reverse triangle inequality state that $\Vert A-B \Vert \geq \vert \Vert A \Vert - \Vert B \Vert \vert$.} and the inequality $ \vert \Vert A \Vert - \Vert B \Vert \vert \geq \Vert B \Vert - \Vert A \Vert $\footnote{The inequality holds universally due to the property of absolute values $\vert x \vert \geq \pm x$ for all $x\in\mathbb{R}$. Specifically, letting $x=\Vert A \Vert - \Vert B \Vert$, we have $\vert x \vert \geq - x$, which simplifies to $\vert \Vert A \Vert - \Vert B \Vert \vert \geq \Vert B \Vert - \Vert A \Vert$. Equality occurs when $\Vert B \Vert \geq \Vert A \Vert$.} 
\begin{align}
   \Vert - Q - K^\top R K \Vert_2 & \geq \vert \Vert -Q \Vert_2 - \Vert K^\top R K \Vert_2 \vert \\
   & \geq \Vert K^\top R K \Vert_2  -\Vert -Q \Vert_2 \\
   & \geq \lambda_{\min}(R) \Vert K \Vert_2^2 - \Vert  Q \Vert_2.
 \end{align}
Combining these bounds results in 
\begin{align*}
   2\left(  \Vert A \Vert_2 +  \Vert B \Vert_2 \Vert K \Vert_2  \right) \Vert P_K \Vert_2 &\geq \lambda_{\min}(R) \Vert K \Vert_2^2 - \Vert  Q \Vert_2, \\
   \Leftrightarrow  \Vert P_K \Vert_2 &\geq \frac{\lambda_{\min}(R) \Vert K \Vert_2^2 - \Vert  Q \Vert_2}{ 2 \Vert A \Vert_2 + 2 \Vert B \Vert_2 \Vert K \Vert_2}.
\end{align*}
Therefore $\Vert P_K \Vert_2 \to \infty$ as $\Vert K \Vert_2 \to \infty$ with $K\in\mathcal{K}$.
Next, we observe that 
\begin{align*}
   \Vert M_{K} \Vert_2 &\geq \vert \Vert A^\top P_K + P_{K}A + Q \Vert_2 - \Vert P B R^{-1} B^\top P \Vert_2 \vert\\
   & \geq \Vert P_K B R^{-1} B^\top P_K \Vert_2 - \Vert A^\top P_K + P_{K}A + Q \Vert_2  \\
   & \geq \Vert P_K B R^{-1} B^\top P_K \Vert_2 - \Vert 2 P_K A \Vert_2 - \Vert Q \Vert_2 \\
   & \geq \lambda_{\min}^+(BR^{-1} B^\top )  \Vert P_{K} \Vert_2^2 - 2\Vert A \Vert_2 \Vert P_{K} \Vert_2 - \Vert Q \Vert_2, 
\end{align*}
where $\lambda_{\min}^+(\cdot)$ denotes the smallest non-zero eigenvalue of a matrix.
For sufficiently large $\Vert P_K \Vert_2$, the quadratic term dominates the other two terms, leading to $\Vert M_K \Vert_2 \to \infty$ as $\Vert P_K \Vert_2\to\infty$. Since $M_K \preceq 0$ (see Lemma \ref{lemma:M_K_def}), its eigenvalues are non-positive, ensuring $e_K = - \operatorname{tr}(M_K)  \geq 0$. Given that $B\neq 0$ (see Assumption \ref{ass:1}), it follows that $e_K\to \infty$ as $\Vert P_K \Vert_2 \to \infty$. Combining these results, we conclude that $e_K\to\infty$ as $K_i \to K\in\partial \mathcal{K}$ or $\Vert K \Vert_2 \to \infty$.
\end{proof}

With coercivity established, we now analyze the compactness of the sublevel sets of $e_K$ over $\mathcal{K}$.

\begin{lemma}
\label{lemma:5} 
For every $\alpha>0$, the sublevel set of the Bellman error $e_K$ over $\mathcal{K}$,
\begin{align}
   L_\alpha = \{K\in \mathcal{K}: e_K \leq \alpha \},
\end{align}
is compact.
\end{lemma}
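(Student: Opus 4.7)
The plan is to invoke the Heine–Borel theorem: since $L_\alpha$ sits inside the finite-dimensional space $\mathbb{R}^{m\times n}$, it suffices to show that $L_\alpha$ is both bounded and closed. Boundedness falls directly out of the second part of Lemma \ref{lemma:4}. Closedness will require combining the first part of Lemma \ref{lemma:4} (to keep limit points inside $\mathcal{K}$) with the continuity implied by Lemma \ref{lemma:3}.

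First, I would argue boundedness by contradiction. Suppose there exists a sequence $\{K_i\}\subset L_\alpha$ with $\Vert K_i \Vert_2 \to \infty$. Since each $K_i\in \mathcal{K}$, the coercivity property $\lim_{\Vert K\Vert_2\to\infty,\,K\in\mathcal{K}} e_K = \infty$ from Lemma \ref{lemma:4} yields $e_{K_i}\to \infty$, which contradicts $e_{K_i}\leq \alpha$. Hence $L_\alpha$ must be bounded.

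Next, I would establish closedness. Take any sequence $\{K_i\}\subset L_\alpha$ with $K_i \to K$ in $\mathbb{R}^{m\times n}$. The limit $K$ lies in the closure $\overline{\mathcal{K}} = \mathcal{K}\cup\partial\mathcal{K}$. I need to rule out $K\in\partial\mathcal{K}$: if $K\in\partial\mathcal{K}$, then the boundary-blowup statement of Lemma \ref{lemma:4} gives $e_{K_i}\to \infty$, again contradicting $e_{K_i}\leq \alpha$. Therefore $K\in\mathcal{K}$. Since $\mathcal{K}\subset\mathcal{K}_\sigma$ and $e_K$ is real analytic (hence continuous) on $\mathcal{K}_\sigma$ by Lemma \ref{lemma:3}, we get $e_K = \lim_{i\to\infty} e_{K_i} \leq \alpha$, so $K\in L_\alpha$. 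This proves that $L_\alpha$ is closed.

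The main subtlety — and really the only non-routine point — is the boundary step: one cannot merely cite continuity of $e_K$ on $\mathcal{K}$, since a priori the limit $K$ could escape to $\partial\mathcal{K}$, where $e_K$ is not even defined in the usual sense. The coercivity result of Lemma \ref{lemma:4} is exactly what is needed to forbid this escape and keep the sublevel set closed relative to $\mathbb{R}^{m\times n}$ rather than merely closed relative to $\mathcal{K}$. Together, boundedness and closedness yield compactness by Heine–Borel.
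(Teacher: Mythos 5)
Your proof is correct and follows essentially the same route as the paper's: Heine--Borel via boundedness (from the coercivity at infinity in Lemma \ref{lemma:4}) and closedness (from the continuity in Lemma \ref{lemma:3} together with the boundary blow-up in Lemma \ref{lemma:4}). If anything, your treatment of closedness is slightly more careful than the paper's, which only deduces closedness \emph{relative to} $\mathcal{K}$ from continuity and folds the boundary-escape case into the unboundedness contradiction, whereas you explicitly rule out limit points on $\partial\mathcal{K}$ before invoking continuity, which is exactly the subtlety needed to get closedness in the ambient space $\mathbb{R}^{m\times n}$.
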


\begin{proof}
   To prove that the sublevel set $L_\alpha$ is compact, we establish that $L_\alpha$ is both closed and bounded. By Lemma~\ref{lemma:3}, $e_K$ is real analytic over $\mathcal{K}_{\sigma}$ and hence continuous over $\mathcal{K}\subset \mathcal{K}_\sigma$. The sublevel set $L_\alpha$ is the preimage of the closed interval $(-\infty,\alpha]$ under $e_K$. Continuity of $e_K$ ensures that $L_\alpha$ is closed in $\mathcal{K}$ \cite[Theorem 18.1]{munkres2014topology}. Lemma \ref{lemma:4} establishes that $e_K\to\infty$ as $\Vert K \Vert_2 \to\infty $ or $K \to \partial K\in \mathcal{K}$. Suppose now that $L_\alpha$ is unbounded. Then there exists a sequence $\{K_n\}\subset L_\alpha$ with $\Vert K_n \Vert_2 \to \infty$ or $K_n\to K\in\partial\mathcal{K}$. By coercivity, $e_{K_n}\to\infty$, which contradicts $e_{K_n}\leq \alpha$. Thus, $L_\alpha$ is bounded. Since $L_{\alpha}$ is both closed and bounded, it is compact.
\end{proof}

\section{Gradient Flow of the Bellman Error}
\label{sec:grad_flow}
In this section, we study the behavior of the gradient flow of the Bellman error.
Since the Bellman error $e_K$ is real analytic over $\mathcal{K}$ (see Lemma \ref{lemma:3}), the gradient is well-defined at all points within this domain. The closed-from expression of the gradient is provided in the following proposition. 
\begin{proposition}
   For $K\in\mathcal{K}$, the gradient of the Bellman error $e_K$ is given by\footnote{We note that the gradient is given in denominator layout notation such that $\nabla_K e_K \in \mathbb{R}^{m\times n}$ has the same dimensions as $K$.}
   \begin{align}
      \nabla_K e_K & = -4  \left(R K - B^\top P_K  \right) X_K, \label{math:grad2}
   \end{align}
   where 
   \begin{align}
      0& =A_K X_K + X_K A_K^\top + \frac{1}{2}\left(\tilde{A}_K + \tilde{A}_K^\top \right), \label{math:Lyap_A}  \\
      \tilde{A}_K & := A - B R^{-1} B^\top P_K. 
   \end{align}
\end{proposition}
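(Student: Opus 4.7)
My plan is to compute the differential $de_K$ and extract the gradient via the adjoint Lyapunov trick that is standard in LQR sensitivity analysis. The essential simplification is that although $M_K$ appears to depend on $K$ both explicitly and through $P_K$, it in fact depends on $K$ only through $P_K$: indeed, $M_K = A^\top P_K + P_K A - P_K B R^{-1} B^\top P_K + Q$ has no explicit $K$. Differentiating and using the cyclic property of the trace, together with symmetry of $dP_K$, I expect to obtain compactly
\begin{align*}
de_K = -\operatorname{tr}(dM_K) = -\operatorname{tr}\!\bigl((\tilde A_K + \tilde A_K^\top)\,dP_K\bigr),
\end{align*}
where $\tilde A_K = A - BR^{-1}B^\top P_K$ appears naturally once the $A$, $A^\top$, $BR^{-1}B^\top P_K$ and $P_K B R^{-1} B^\top$ pieces are combined.

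Next, I would differentiate the defining Lyapunov equation \eqref{math:Ly2} for $P_K$ and use $dA_K = -B\,dK$ together with the identity $B^\top P_K - RK = -R E_K$ (with $E_K := K - R^{-1} B^\top P_K$, so that $R E_K = RK - B^\top P_K$) to get a Lyapunov equation for the differential:
\begin{align*}
A_K^\top dP_K + dP_K A_K = -dK^\top R E_K - E_K^\top R\,dK .
\end{align*}

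The adjoint step is where everything comes together. Because $K \in \mathcal{K}$, $A_K$ is Hurwitz, so the Lyapunov equation \eqref{math:Lyap_A} admits a unique solution $X_K$; its right-hand side $-\tfrac12(\tilde A_K + \tilde A_K^\top)$ is symmetric, which forces $X_K = X_K^\top$ by uniqueness. Multiplying the $dP_K$ equation by $X_K$ and taking the trace, and then cyclically rotating to move $X_K$ to the other side, yields
\begin{align*}
\operatorname{tr}\bigl(X_K(A_K^\top dP_K + dP_K A_K)\bigr) = \operatorname{tr}\bigl((A_K X_K + X_K A_K^\top)\,dP_K\bigr) = -\tfrac12 \operatorname{tr}\bigl((\tilde A_K + \tilde A_K^\top) dP_K\bigr) = \tfrac12 de_K,
\end{align*}
so that $de_K = 2\operatorname{tr}\bigl(X_K(-dK^\top RE_K - E_K^\top R\,dK)\bigr)$.

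Finally, I would use symmetry of $X_K$ and $R$, together with $\operatorname{tr}(A^\top dK) = \operatorname{tr}(dK^\top A)$, to show that the two contributions are equal, collapsing the expression into $de_K = -4\operatorname{tr}(X_K E_K^\top R\, dK)$. Reading off the denominator-layout gradient from $de_K = \operatorname{tr}(C\,dK)$ via $\nabla_K e_K = C^\top$ then gives the claim $\nabla_K e_K = -4(RK - B^\top P_K)X_K$. The main obstacle, and the step where care is required, is the bookkeeping in this last identification: keeping track of transposes and the factor of $4$ correctly, and properly justifying that $X_K$ is symmetric so that the two contributions from $dK^\top$ and $dK$ can be merged into a single clean term.
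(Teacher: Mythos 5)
Your proposal is correct, and every step checks out: the reduction $de_K=-\operatorname{tr}\bigl((\tilde A_K+\tilde A_K^\top)\,dP_K\bigr)$, the differentiated Lyapunov equation with right-hand side $-dK^\top RE_K-E_K^\top R\,dK$, the symmetry of $X_K$ by uniqueness, and the final bookkeeping all yield $\nabla_K e_K=-4(RK-B^\top P_K)X_K$. The paper follows the same overall strategy (differentiate \eqref{math:Ly2}, differentiate $M_K$, pair the two), but executes the key duality step differently: it substitutes the integral representation $dP_K=\int_0^\infty e^{A_K^\top t}(U+U^\top)e^{A_K t}\,\mathrm{d}t$ into the trace and cyclically shifts the exponentials onto $\tilde A_K$, identifying $X_K$ as the symmetric part of $\int_0^\infty e^{A_K t}\tilde A_K e^{A_K^\top t}\,\mathrm{d}t$ and arguing that only this symmetric part survives because the trace of a skew-symmetric matrix vanishes. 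Your adjoint pairing --- multiplying the $dP_K$ equation by $X_K$ and using $\operatorname{tr}\bigl(X_K(A_K^\top dP_K+dP_K A_K)\bigr)=\operatorname{tr}\bigl((A_K X_K+X_K A_K^\top)dP_K\bigr)$ --- reaches the same identity without any integral manipulation, and it makes the appearance of the symmetrized $\tfrac12(\tilde A_K+\tilde A_K^\top)$ in \eqref{math:Lyap_A} transparent: it is forced by the symmetry of $dP_K$ rather than by a separate skew-symmetry argument. Both routes need $A_K$ Hurwitz (yours for existence, uniqueness and symmetry of $X_K$ and $dP_K$; the paper's for convergence of the integrals), so neither extends beyond $K\in\mathcal{K}$. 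Your version is arguably the cleaner write-up of the same computation; the paper's version has the side benefit of exhibiting $X_K$ explicitly as a convergent integral.
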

\vspace*{0.2cm}
\begin{proof}
   The Bellman error is a composition of $C^\omega$ maps, $K\mapsto P_K \mapsto M_K \mapsto -\operatorname{tr}(M_K)$. 
   The differential of \eqref{math:Ly2} with respect to $K$ is given by 
   \begin{align}
      0 &= (-B dK)^\top P_K + A_K^\top dP_K \notag\\
      & ~~ + ~ dP_K A_K + P_K (-B dK) + dK^\top RK + K^\top R dK \notag \\
 &=A_K^\top dP_K + dP_K A_K^\top +  \underbrace{dK^\top \left(RK - B^\top P_K \right)}_{=:U} \notag\\ 
   &~~ + ~ \left(K^\top R - P_KB \right)dK  \\
    & = A_K^\top dP_K + dP_K A_K^\top + U + U^\top. \label{math:lyap_temp}
  \end{align} 
Since $A_K$ is Hurwitz stable for every $K\in\mathcal{K}$, the solution of the Lyapunov equation \eqref{math:lyap_temp} is unique and can be expressed as  
   \begin{align}
    dP_K = \int_0^\infty e^{A_K^\top t}(U+U^\top) e^{A_K t}\mathrm{d }t. \label{math:lyap3}
   \end{align}
   The differential of $M_K$ \ref{math:sbe} is given by
  \begin{align}
   \label{math:dMK}
   \begin{split}
   dM_K &= A^\top dP_K + dP_K A \\
   & ~~ - dP_K B R^{-1} B^\top P_K - P_K B R^{-1} B^\top dP_K.
   \end{split}
  \end{align}
  Putting \eqref{math:dMK} into the differential of Bellman error yields
  { 
  \begin{align}
   d e_K & = - \operatorname{tr}(dM_K)\\
   & =  - 2\operatorname{tr}(dP_K A )+2 \operatorname{tr}(d P_K B R^{-1} B^\top P_K) \\
   & = - 2 \operatorname{tr }\left(dP_K (A - B R^{-1} B^\top P_K )\right)\\
   & = - 2 \operatorname{tr }\left(\int_0^\infty e^{A_K^\top t}(U+U^\top) e^{A_K t}\mathrm{d }t \tilde{A}_K \right) \\
   & = - 4 \operatorname{tr }\left(\int_0^\infty e^{A_K^\top t} U e^{A_K t}\mathrm{d }t \tilde{A}_K\right). 
  \end{align}}By using the cyclic property and the linearity of the trace operator\footnote{By combining both properties, the equality $\operatorname{tr }(\int A(x) B(x) \mathrm{d}t) = \int \operatorname{tr }(A(x) B(x))  \mathrm{d}t=  \int \operatorname{tr }(B(x) A(x))  \mathrm{d}t= \operatorname{tr }(\int B(x) A(x) \mathrm{d}t)$ holds.}, we obtain
  {\small \begin{align}
   d e_K & = -4  \operatorname{tr }\left(U \int_0^\infty e^{A_K t} \tilde{A}_K e^{A_K^\top t}\mathrm{d}t  \right). \label{math:deK}
  \end{align}
  }Since the trace of a skew-symmetric matrix is always zero, only the symmetric part of the integral in \eqref{math:deK} contributes to the trace. 
  The symmetric part of the matrix $\int_0^\infty e^{A_K t} \tilde{A}_K e^{A_K^\top t}\mathrm{d}t $ is equal to the solution $X_K$ of Lyapunov equation 
  \begin{align}
   A_K X_K + X_K A_K^\top + \frac{1}{2}\left(\tilde{A}_K + \tilde{A}_K^\top \right)  =0. \label{math:Lyap4}
  \end{align}
  Combining everything yields
  \begin{align}
   d e_K = -4 \operatorname{ tr }\left( dK^\top \left(R K - B^\top P_K  \right) X_K \right). 
  \end{align}
  By using the relation $d f = \operatorname{ tr }(dZ W) \Rightarrow \nabla_Z f = W $ \cite{minka1997old}, we obtain \eqref{math:grad2}.
\end{proof}

\begin{remark}
   Note that the gradient \eqref{math:grad2} only exists for $K\in\mathcal{K}$. Otherwise, for $K\in \mathcal{K}_{\sigma} \setminus \mathcal{K}$, the integral in \eqref{math:deK} diverges because $A_K$ is not Hurwitz stable.
\end{remark}

\begin{theorem}
   \label{theorem:1}
   For $K\in\mathcal{K}$, the minimization problem of the Bellman error \eqref{math:sbe}, i.e.,
   \begin{align}
      \min_{K\in\mathcal{K}} e_K,
   \end{align}
   has a unique global minimizer $K^*=R^{-1}B^\top P^*$, where $P^*$ is the unique positive definite solution of the CARE \eqref{math:CARE}. Furthermore, $K^*$ is the only stationary point of $e_K$ on $\mathcal{K}$.
\end{theorem}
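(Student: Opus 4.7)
The plan is to prove the theorem in two parts: first that $K^*$ is the unique global minimizer of $e_K$ on $\mathcal{K}$, then that $K^*$ is the only stationary point.

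For the first part, I would start from the factorization $M_K = -(K - R^{-1}B^\top P_K)^\top R (K - R^{-1}B^\top P_K)$ derived in the proof of Lemma \ref{lemma:M_K_def} (see \eqref{math:MK_new}), and introduce the shorthand $W_K := RK - B^\top P_K$, so that $K - R^{-1} B^\top P_K = R^{-1} W_K$. This gives $e_K = \operatorname{tr}(W_K^\top R^{-1} W_K) \geq 0$ for every $K \in \mathcal{K}$, with equality iff $W_K = 0$ (because $R \succ 0$). The condition $W_K = 0$ is equivalent to $K = R^{-1}B^\top P_K$; plugging this back into \eqref{math:Ly2} reduces the Lyapunov equation exactly to the CARE \eqref{math:CARE}. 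Since $A_K$ is Hurwitz for $K \in \mathcal{K}$ and $P_K \succeq 0$, $P_K$ must coincide with the unique positive definite stabilizing solution $P^*$ of the CARE under Assumption \ref{ass:1}, so $K = K^*$. Therefore the infimum of $e_K$ over $\mathcal{K}$ equals zero and is achieved uniquely at $K^*$.

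For the second part, the stationarity condition $\nabla_K e_K = -4 W_K X_K = 0$ is automatically satisfied at $K^*$ since $W_{K^*} = 0$. To rule out other stationary points $K_s \in \mathcal{K}$, I would show that $W_{K_s} X_{K_s} = 0$ forces $W_{K_s} = 0$, which then reduces the problem to the first part and yields $K_s = K^*$. The natural route is to establish that $X_K$ is invertible for every $K \in \mathcal{K}$.

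The main obstacle is exactly this invertibility property. At $K^*$, substituting $\tilde{A}_{K^*} = A_{K^*}$ into \eqref{math:Lyap_A} immediately yields $X_{K^*} = -\tfrac{1}{2} I$, so $X_K$ is invertible in a neighborhood of $K^*$. For general $K \in \mathcal{K}$, using the controllability-Gramian identity $A_K G_K + G_K A_K^\top + I = 0$ together with $\tilde A_K = A_K + B R^{-1} W_K$, I would rewrite $X_K = -\tfrac{1}{2} I + X_K^{(2)}$, where $X_K^{(2)}$ solves a Lyapunov equation whose right-hand side is $\tfrac{1}{2}(B R^{-1} W_K + W_K^\top R^{-1} B^\top)$. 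Combined with the classical fact that $\tilde{A}_K = A - BR^{-1}B^\top P_K$ is Hurwitz on $\mathcal{K}$ (one step of Kleinman's iteration preserves stabilization, which follows from the identity $P_K \tilde{A}_K + \tilde{A}_K^\top P_K = -(Q + W_K^\top R^{-1}W_K + P_K BR^{-1}B^\top P_K)$ together with $P_K \succ 0$ on $\mathcal{K}$), I expect a careful analysis of $\ker X_K$ as an $A_K^\top$-invariant subspace contained in $\ker S_K$ to preclude any nonzero $v \in \ker X_K \cap \operatorname{Range}(W_K^\top)$, thereby excluding stationary points other than $K^*$.
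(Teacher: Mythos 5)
Your first part is correct and in fact more direct than the paper's own argument. From \eqref{math:MK_new} you get $e_K = \operatorname{tr}(W_K^\top R^{-1} W_K) \geq 0$ with equality iff $W_K = 0$, and $W_K=0$ turns \eqref{math:Ly2} into the CARE \eqref{math:CARE}; since $A_K$ is Hurwitz for $K\in\mathcal{K}$, $P_K$ is then the (unique) stabilizing solution $P^*$, so $K=K^*$. The paper instead invokes coercivity, compact sublevel sets and the extreme value theorem to obtain existence of a minimizer and only then identifies it via stationarity; your route settles the global-minimizer claim with no topology at all, which is a genuine simplification.

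The second part, however, has a real gap, and it sits exactly where the paper does its work. To conclude that $W_K X_K = 0$ forces $W_K = 0$ you need (essentially) invertibility of $X_K$ for every $K\in\mathcal{K}$, and your proposal stops at ``I expect a careful analysis \dots to preclude'' --- a plan, not a proof. The mechanism you sketch is also doubtful: the standard fact that $\ker X$ is $A^\top$-invariant for a Lyapunov solution $X=\int_0^\infty e^{At}Se^{A^\top t}\,\mathrm{d}t$ relies on $S\succeq 0$ (so that $v^\top S v=0$ implies $Sv=0$), whereas here the inhomogeneities $\tfrac{1}{2}(\tilde A_K+\tilde A_K^\top)$ and $\tfrac{1}{2}(BR^{-1}W_K+W_K^\top R^{-1}B^\top)$ are indefinite in general; likewise, the decomposition $X_K=-\tfrac{1}{2}I_n+X_K^{(2)}$ gives no uniform bound on $X_K^{(2)}$ away from $K^*$, so it does not by itself yield invertibility. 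The paper closes this step by first proving that $\tilde A_K+\tilde A_K^\top$ is nonsingular (using $\tilde A_K^\top P_K+P_K\tilde A_K\preceq 0$, $\neq 0$, and a trace contradiction) and then sandwiching \eqref{math:Lyap_A} with a putative $v\in\ker X_K$ to obtain $v^\top(\tilde A_K+\tilde A_K^\top)v=0$ in \eqref{math:Lyap7}, hence $v=0$. You would need to supply an argument of this kind to finish --- and note that this point is delicate: even the conclusion ``$v^\top(\tilde A_K+\tilde A_K^\top)v=0$ and $\tilde A_K+\tilde A_K^\top$ nonsingular imply $v=0$'' actually requires definiteness rather than mere nonsingularity of the symmetric matrix (an indefinite nonsingular symmetric matrix has nonzero isotropic vectors), so the kernel analysis deserves more care than either your sketch or a verbatim transcription of the paper's step provides.
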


\begin{proof}
   The Bellman error $e_K$ is continuous (see Lemma~\ref{lemma:3}) and has nonempty compact sublevel sets on $\mathcal{K}$ (see Lemma~\ref{lemma:5}), given that $(A,B)$ is stabilizable (see Assumption~\ref{ass:1}). By the extreme value theorem \cite[Theorem 4.15.]{rudin1964principles}, a minimum exists on these sublevel sets. Since $e_K$ is not constant, the minimum lies within $L_{\alpha}$.
   To identify the stationary point, we prove that $X_K$ is nonsingular and solve $\nabla_K e_K=0$ with respect to $K$. First, we show that $\tilde{A}_K^\top  + \tilde{A}_K$ is nonsingular. 
   By substituting $A_K  = \tilde{A}_K + B R^{-1} B^\top  P_K - BK$ into \eqref{math:Ly2}, we obtain  

   {\small \vspace{-0.4cm}
   \begin{align}
      \tilde{A}_K^\top P_K + P_K \tilde{A}_K  + 2 P_K B R^{-1} B^\top P_K - K^\top B^\top P_K - P_K BK \notag  \\ = - Q - K^\top R K, \notag \\
   \Leftrightarrow \tilde{A}_K^\top P_K + P_K \tilde{A}_K = - Q - L^\top R L - (K+L)^\top R (K+L), \label{math:eq1}
   \end{align}}where $L=R^{-1}B^T P_K$. 
   The right-hand side of \eqref{math:eq1} is negative semidefinite and nonzero due to $Q\neq 0$ (see Assumption \ref{ass:1}). 
   Therefore, 
   \begin{align}
      v^T \left( \tilde{A}_K^\top P_K + P_K \tilde{A}_K \right)v \leq 0 \label{math:Lyap5}
      \end{align}
   must hold for all vectors $v\in\mathbb{R}^n$ and  $\tilde{A}_K^\top P_K + P_K \tilde{A}_K \neq 0$.
   Assume for a contradiction that $\tilde{A}_K^\top  + \tilde{A}_K$ is singular. Then, there exists a nonzero vector $v$ such that $(\tilde{A}_K^\top + \tilde{A}_K) v =0$. Substituting $\tilde{A}_K v = - \tilde{A}_K^\top v$ into \eqref{math:Lyap5} leads to
   \begin{align}
      v^T ( \underbrace{\tilde{A}_K^\top P_K - P_K \tilde{A}_K^\top}_{=:N}  ) v \leq 0. \label{math:Lyap6}
   \end{align}
   Now observe that 
   \begin{align}
      \operatorname{tr }(N )&= \operatorname{tr }(\tilde{A}_K^\top P_K) - \operatorname{tr }(P_K \tilde{A}_K^\top )\\
      & = \operatorname{tr }(\tilde{A}_K^\top P_K) - \operatorname{tr }( \tilde{A}_K^\top P_K ) =0.
   \end{align}
   Since the sum of the eigenvalues of $N$ is zero, $N$ cannot be negative semidefinite and nonzero, leading to a contradiction.
   Hence, $\tilde{A}_K^\top  + \tilde{A}_K$ is nonsingular. Next, we show that $X_K$ is nonsingular. Assume $X_K$ is singular, then there exists a nonzero vector $v$ such that $X_K v =0$. Pre- and post-multiplying \eqref{math:Lyap_A} by $v^\top$ and $v$, respectively, yields 
   \begin{align}
      0 &=  v^T \left( A_K X_K + X_K A_K^\top + \frac{1}{2} \left(\tilde{A}_K + \tilde{A}_K^\top  \right) \right) v \\
      & = v^T \frac{1}{2} \left(\tilde{A}_K + \tilde{A}_K^\top  \right) v. \label{math:Lyap7}
   \end{align}
   Since $\tilde{A}_K + \tilde{A}_K^\top $ is nonsingular, \eqref{math:Lyap7} implies $v=0$, which contradicts the assumption that $v$ is nonzero. Hence, $X_K$ is nonsingular.
   With $X_K$ nonsingular, the equation $\nabla_K e_K = 0$ implies $(RK - B^\top P_K)=0$. By setting $(RK - B^\top P_K)=0$, we obtain $K^*=R^{-1}B^\top P^*$, where $P^*$ is the unique positive definite solution of \eqref{math:CARE}.
    Therefore, $K^*$ is the unique minimizer of $e_K$ within $\mathcal{K}$.
\end{proof}

In the next theorem, we propose an optimization algorithm to solve the optimization problem of Theorem \ref{theorem:1}.

\begin{theorem}
   \label{theorem:2}
   The gradient flow 
   \begin{align}
      \dot K &= - \beta \nabla_K e_K, \quad K(0) = K_0 \in \mathcal{K}, \label{math:grad_flow}
   \end{align}
   where $\beta > 0$, solves the optimization problem of Theorem~\ref{theorem:1} and converges to the optimal feedback $K^*$. The trajectory $K(t)$ remains within $\mathcal{K}$ for all $t\geq 0$.
\end{theorem}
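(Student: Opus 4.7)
The plan is to verify the three claims of Theorem~\ref{theorem:2} in a natural order: local well-posedness of the ODE, forward invariance of $\mathcal{K}$ (which simultaneously guarantees global existence), and finally convergence to $K^*$.

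First, I would invoke Lemma~\ref{lemma:3}, which states that $e_K\in C^\omega(\mathcal{K}_\sigma)$, so $\nabla_K e_K$ is locally Lipschitz on the open set $\mathcal{K}\subset\mathcal{K}_\sigma$. The Picard--Lindel\"of theorem therefore yields a unique maximal solution $K(t)$ of \eqref{math:grad_flow} on some interval $[0,T_{\max})$ with $K(t)\in\mathcal{K}$ for all $t<T_{\max}$. The key Lyapunov-type computation is then
\begin{align*}
   \tfrac{d}{dt}\, e_{K(t)} = \langle \nabla_K e_K,\, \dot K \rangle_F = -\beta\,\Vert \nabla_K e_K\Vert_F^2 \leq 0,
\end{align*}
so $e_{K(t)}$ is non-increasing along the flow, and in particular $e_{K(t)}\leq e_{K_0}=:\alpha$ for all $t\in[0,T_{\max})$.

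Next, I would use this monotonicity together with Lemma~\ref{lemma:5} to establish forward invariance. Since $K(t)\in L_\alpha=\{K\in\mathcal{K}:e_K\leq\alpha\}$ for $t\in[0,T_{\max})$ and $L_\alpha$ is a compact subset of $\mathcal{K}$, the trajectory cannot approach $\partial\mathcal{K}$ nor escape to infinity in norm. A standard extension argument for ODEs (if $T_{\max}<\infty$, the solution must leave every compact subset of $\mathcal{K}$) therefore forces $T_{\max}=\infty$, yielding both global existence and $K(t)\in\mathcal{K}$ for all $t\geq 0$.

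For convergence, I would apply a LaSalle-style argument. Because $e_K$ is bounded below (by $e_{K^*}\geq 0$) and monotonically decreasing along the flow, the limit $\lim_{t\to\infty}e_{K(t)}$ exists, and integrating the Lyapunov identity gives
\begin{align*}
   \beta\int_0^\infty \Vert \nabla_K e_{K(t)}\Vert_F^2\,\mathrm dt = e_{K_0} - \lim_{t\to\infty} e_{K(t)} < \infty.
\end{align*}
Every $\omega$-limit point $\bar K$ of the bounded trajectory lies in the compact set $L_\alpha\subset\mathcal{K}$ (closedness in $\mathcal{K}$ again by Lemma~\ref{lemma:5}), and by continuity of $\nabla_K e_K$ on $\mathcal{K}$ together with the integrability above, $\nabla_{\bar K} e_{\bar K}=0$. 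Theorem~\ref{theorem:1} identifies $K^*$ as the unique stationary point of $e_K$ on $\mathcal{K}$, so the $\omega$-limit set is exactly $\{K^*\}$; combined with precompactness of the trajectory in $\mathcal{K}$, this gives $K(t)\to K^*$ as $t\to\infty$.

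The delicate step is the forward invariance of $\mathcal{K}$: without the coercivity of $e_K$ (Lemma~\ref{lemma:4}) and compactness of sublevel sets (Lemma~\ref{lemma:5}), the trajectory could in principle reach $\partial\mathcal{K}$ where $A_K$ loses Hurwitz stability and the gradient formula \eqref{math:grad2} breaks down. Since these lemmas are already in hand, the argument reduces to cleanly chaining monotonicity of the Lyapunov function $e_K$ with the compactness of $L_\alpha$ and invoking LaSalle together with the uniqueness of the stationary point established in Theorem~\ref{theorem:1}.
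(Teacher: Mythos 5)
Your proof is correct and follows essentially the same route as the paper: $e_K$ serves as the Lyapunov function, its monotone decrease along the flow confines the trajectory to the compact sublevel set $L_{e_{K_0}}\subset\mathcal{K}$, and the uniqueness of the stationary point from Theorem~\ref{theorem:1} forces convergence to $K^*$. You are more explicit than the paper about local well-posedness of the ODE, global existence via the escape-from-compact-sets argument, and the LaSalle/$\omega$-limit step, all of which the paper's shorter Lyapunov argument leaves implicit.
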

\begin{proof}
   Let $e_K$ be the Lyapunov functional. Since $M_K$ is negative semidefinite, $e_K =- \operatorname{tr }(M_K) = 0$ if and only if $M_K=0$, which implies $K=K^*$. Additionally, $e_K >0$ for $K\neq K^*$ over $\mathcal{K}$ because the codomain of $e_K$ is $\mathbb{R}_{\geq 0}$ (see Lemma \ref{lemma:M_K_def}). The time derivative of $e_K$ is given by
   \begin{align}
      \dot e_K  = \operatorname{tr }\bigl(\left(\nabla_K e_K\right)^\top \dot K \bigr) = - \alpha \Vert \nabla_K e_K \Vert^2_F \leq 0. \label{math:neg}
   \end{align}
   Since $\dot e_K <0 $ for all $K\in\mathcal{K}\setminus \{K^*\}$, $e_K$ is a proper Lyapunov function over $\mathcal{K}$ and the equilibrium point $K^*$ is asymptotically stable. Additionally, the trajectory $K(t)$ for $t>0$ remains in the sublevel set $L_{e_{K_0}}$ for any $K_0\in\mathcal{K}$, because $e_K$ is monotonically decreasing over time (see \eqref{math:neg}). Hence, the region of attraction of $K^*$ is $\mathcal{K}$.
\end{proof}

\begin{remark}
   The gradient flow of Theorem \ref{theorem:2} can be interpreted as a continuous-time analogue of a PI algorithm (see Algorithm \ref{alg:PI}), specifically Kleinman's Algorithm \eqref{math:Kleinman}. For a given policy $K$, solving the Lyapunov equation \eqref{math:Ly2} with respect to $P_K$ can be seen as the \textit{policy evaluation step}, where the value matrix is determined for the given policy. Simultaneously, the gradient flow updates the policy   by moving in the direction of the negative gradient, thereby reducing the Bellman error and improving the policy. This corresponds to the {policy improvement step} of the PI algorithm.
   By using the gradient flow, both steps occur concurrently in a continuous manner.
\end{remark}

\begin{remark}
 In contrast to the gradient flows in LQR theory, which minimize the true LQR cost, the proposed gradient flow minimizes the violation of an optimality condition, namely the HJB equation, quantified by the Bellman error. Since many RL methods are based on the HJB equation, 
 this work establishes a connection between continuous-time LQR and RL. We showed that the Bellman error can be parametrized with respect to the feedback gain, exhibits similar properties to the LQR cost function, and is thus well-suited for a policy gradient flow.
\end{remark}

\section{Simulation Results}
\label{sec:3} In this section, we illustrate the matrix function $e_K$ and its gradient flow, and  compare our method with the gradient flow of the LQR cost \eqref{math:grad} of \cite{bu2020clqr}. 
To visualize the results, we consider a system with dimensions $n=2$ and $m=1$ given by
\begin{align}
   \label{math:sys}
   A = \begin{bmatrix}-2 & 1 \\ 0 & -1 \end{bmatrix}, \quad B = \begin{bmatrix} 1 \\ 1\end{bmatrix}, \quad K = \begin{bmatrix} K_1 & K_2\end{bmatrix},
\end{align}
and $ Q = I_2, R = 2$. The system \eqref{math:sys} is stabilizable and the pair $(A,\sqrt{Q})$ is detectable. 
By using the Routh-Hurwitz stability criterion, the set of stabilizing feedbacks is given by
\begin{align}
   \mathcal{K} = \{K\in\mathbb{R}^{1 \times 2} : K_2 > -K_1 - 1   \}. \label{math:instable}
\end{align}
The Bellman error $e_K$ and the LQR cost $f_K$ are explicitly given by the rational functions\footnote{The explicit expressions are obtained by using the vectorization of the Lyapunov equations \eqref{math:Ly2} and \eqref{math:Lyap_A}.}
{\small
\begin{align*}
   e_K =   &  \big( K_1^6 + 4K_1^5K_2 + 12K_1^5 + 7K_1^4K_2^2 + 34K_1^4K_2 + 49K_1^4 \\
   & + 8K_1^3K_2^3 + 40K_1^3K_2^2 + 84K_1^3K_2 + 72K_1^3 \\
   & + 7K_1^2K_2^4 + 36K_1^2K_2^3 + 58K_1^2K_2^2 + 32K_1^2K_2 + 29K_1^2 \\
   & + 4K_1K_2^5 + 28K_1K_2^4 + 60K_1K_2^3 + 16K_1K_2^2 \\
   & - 52K_1K_2 - 8K_1 + K_2^6 + 10K_2^5 + 37K_2^4 \\
   & + 56K_2^3 + 17K_2^2 - 22K_2 + 5
   \big) / \\
    & \big(
   2\big(K_1^2 + 2K_1K_2 + 4K_1 + K_2^2 + 4K_2 + 3\big)^2
   \big),
\end{align*}
\begin{align*}
   f_K =  & 2\big( K_1^3 + 2 K_1^2 K_2 + 5 K_1^2 + 2 K_1 K_2^2 \\
   & + 4 K_1 K_2 + 4 K_1 + 2 K_2^3 + 7 K_2^2 + 2 K_2 + 5 \big) / \\
   & \big( 2 \left( K_1^2 + 2 K_1 K_2 + 4 K_1 + K_2^2 + 4 K_2 + 3 \right) \big).
\end{align*}}The Bellman error $e_K$ has a degree of $6$ and $4$ in the numerator and denominator, respectively, whereas the LQR cost $f_K$ has degrees of $3$ and $2$. The denominators of both functions are the same, except that one is squared, while the numerator polynomials differ.
Notably, the stability boundary $K_2 = -K_1 -1$ (see \eqref{math:instable}) corresponds to a root of the denominators of both $e_K$ and $f_K$.
\begin{figure}
   \centering
   \includegraphics[scale=0.75]{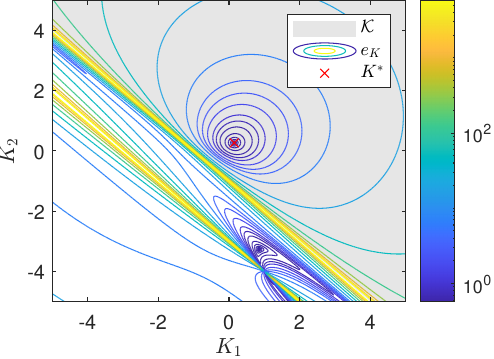}
   \caption{Contour plot of the Bellman error $e_K$}
   \label{fig:2}
\end{figure}
\begin{figure}
   \centering
   \includegraphics[scale=0.75]{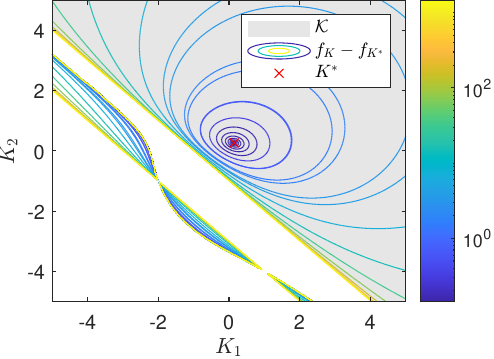}
   \caption[d]{Contour plot of the LQR cost $f_K-f_{K^*}$\footnotemark}
   \label{fig:1}
\end{figure}
\footnotetext{Although the integral of the LQR cost function diverges for unstable feedbacks, positive and finite costs can be obtained in the instability region $K_2 < -K_1 - 1$ by using the vectorization of the Lyapunov equations.}
The contour plots of $e_K$ and $f_K-f_{K^*}$ are shown in Figures \ref{fig:1} and \ref{fig:2}, where the same $28$ level sets are depicted. While the sublevel sets in the region $\mathcal{K}$ are similar, they exhibit different shapes. Both functions approach infinity by moving towards the stability boundary or as $\Vert K \Vert_2\to \infty$, $K\in\mathcal{K}$. Both matrix functions have a zero value at $K^*$ and are non-injective on $\mathcal{K}$, i.e., there are infinitely many policies for a given function value.

\begin{figure}
   \centering
   \includegraphics[scale=0.75]{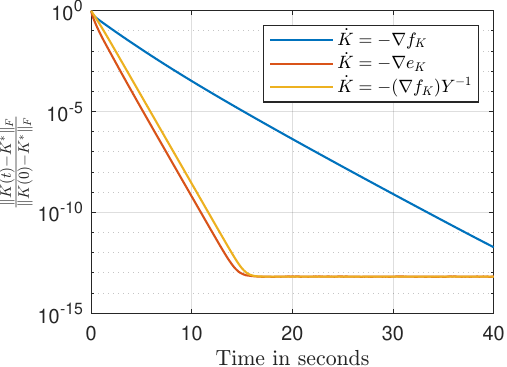}
    \caption{Normalized residuals $\frac{\Vert K(t) - K^* \Vert_F}{ \Vert K(0) - K^* \Vert_F }$ for (i) the gradient flow of the LQR cost, (ii) the gradient flow of the Bellman error, and (iii) the natural gradient flow of the LQR cost.}
   \label{fig:3}
\end{figure}

To evaluate convergence behavior, we generated 200 random instances with $A \in \mathbb{R}^{2 \times 2}$, $B \in \mathbb{R}^{2 \times 1}$, and a stabilizing initial gain $K(0) \in \mathbb{R}^{1 \times 2}$, where each matrix entry was sampled from a standard normal distribution. Figure~\ref{fig:3} shows the normalized residuals $\frac{\Vert K(t) - K^* \Vert_F}{ \Vert K(0) - K^* \Vert_F }$ over time for the gradient flow of the LQR cost, the gradient flow of the Bellman error, and the natural gradient flow of the LQR cost with $\gamma=1$. All three gradient flows exhibit linear convergence to the optimal gain $K^*$. The Bellman error gradient flow converges at a rate comparable to the natural gradient flow, while the standard gradient flow converges more slowly.

We observed that the gradient flow of the Bellman error may induce numerical instabilities, likely due to the presence of high-degree polynomial terms.
 In contrast, the natural gradient flow with $\gamma=1$ involves polynomials with lower degree, suggesting improved numerical stability. Simulations were performed using the \texttt{LSODA} ODE solver from \texttt{SciPy}, which provided reliable performance.

\section{Conclusion}
\label{sec:conc} 
In this paper, we introduced a novel continuous-time Bellman error for the LQR problem, parametrized by the feedback gain. We analyzed its properties, including its effective domain, smoothness, and coerciveness, and showed the existence of a unique stationary point. 
Furthermore, we derived a closed-form gradient expression of the Bellman error and established its well-definedness within the stabilizing policy region. 
The corresponding gradient flow solves the LQR problem and converges to the optimal feedback.
Additionally, this work provided novel insights on the LQR problem by bridging RL perspectives on the suboptimality of the CARE with first-order approaches in LQR theory.
We validated our method through simulations of multiple systems and demonstrated that our approach converges at a rate comparable to state-of-the-art methods.

\printbibliography

\end{document}